\newtheorem{theorem}{Theorem}
\newtheorem{definition}{Definition}
\definecolor{darkgreen}{rgb}{0,0.7,0}
\newcommand{\kibitz}[2]{}
\newcommand{\cb}[1]{\kibitz{darkgreen} {[CB: #1]}}
\newcommand{\jb}[1]{\kibitz{red} {[JB: #1]}}
\newcommand{\paren}[1]{\left(#1\right)}
\newcommand{\GF}{\mathbf{GF}}
\newcommand{\NW}{\mathbf{NW}}
\newcommand{\cut}[1]{}
\def \QED {\hfill{$\Box$}}
\newenvironment{proofof}[1]{\noindent {\em Proof of #1.  }}{\QED}
\newenvironment{remindertheorem}[1]{\medskip \noindent {\bf Reminder of Theorem #1.  }\em}{}
\title{Set Families with Low Pairwise Intersection}
\author{Calvin Beideman \\ High School \\ \texttt{mathematicsfan@gmail.com} \and Jeremiah Blocki \\ Carnegie Mellon University \\ \texttt{jblocki@cs.cmu.edu}}
\begin{document}

\maketitle

\begin{abstract}
A $\paren{n,\ell,\gamma}$-sharing set family of size $m$ is a family of sets $S_1,\ldots,S_m\subseteq [n]$ s.t. each set has size $\ell$ and each pair of sets shares at most $\gamma$ elements. We let $m\paren{n,\ell,\gamma}$ denote the maximum size of any such set family and we consider the following question: How large can $m\paren{n,\ell,\gamma}$ be?  $\paren{n,\ell,\gamma}$-sharing set families have a rich set of applications including the construction of pseudorandom number generators\cite{nisanwidgerson} and usable and secure password management schemes \cite{NaturallyRehearsingPasswords}. We analyze the explicit construction of Blocki et al \cite{NaturallyRehearsingPasswords} using recent bounds \cite{sondow2009ramanujan} on the value of the $t$'th Ramanujan prime \cite{ramanujan1919proof}. We show that this explicit construction produces a $\paren{4\ell^2\ln 4\ell,\ell,\gamma}$-sharing set family of size $\paren{2 \ell \ln 2\ell}^{\gamma+1}$ for any $\ell\geq \gamma$. We also show that the construction of Blocki et al \cite{NaturallyRehearsingPasswords} can be used to obtain a {\em weak} $\paren{n,\ell,\gamma}$-sharing set family of size $m$ for {\em any} $m >0$. These results are competitive with the inexplicit  construction of Raz et al \cite{Raz:1999:ERR:301250.301292} for weak $\paren{n,\ell,\gamma}$-sharing families. We show that our explicit construction of weak $\paren{n,\ell,\gamma}$-sharing set families can be used to obtain a parallelizable pseudorandom number generator with a low memory footprint by using the pseudorandom number generator of Nisan and Wigderson\cite{nisanwidgerson}. We also prove that $m\paren{n,n/c_1,c_2n}$ must be a constant whenever $c_2 \leq \frac{2}{c_1^3+c_1^2}$. We show that this bound is nearly tight as $m\paren{n,n/c_1,c_2n}$ grows exponentially fast whenever $c_2 > c_1^{-2}$.   
\end{abstract}

\section{Introduction} \label{sec:introduction}
Informally, we define an $\paren{n, \ell, \gamma}$-sharing set family of size $m$ to be a collection of $m$ subsets of $[n]$, each of size $\ell$, no two of which have more than $\gamma$ elements in common, and we let $m\paren{n,\ell,\gamma}$ denote the maximum size of such a set family. How large can $m\paren{n,\ell,\gamma}$ be? Can we find explicit constructions of large $\paren{n, \ell, \gamma}$-sharing set families? While these combinatorial questions are interesting in their own right, these question also have numerous practical implications including the construction of pseudorandom number generators \cite{nisanwidgerson}, randomness extractors\cite{trevisan2001extractors,Raz:1999:ERR:301250.301292} and most recently usable and secure password management scheme (systematic strategies for users to create and remember multiple passwords) \cite{NaturallyRehearsingPasswords}.

\paragraph{Applications to Pseudorandom Number Generation}  A pseudorandom number generator is a function $\mathbf{G}:\{0,1\}^n \rightarrow m$ which takes a uniformly random seed $x \sim \{0,1\}^n$ of length $n$, and outputs a string $\mathbf{G}(x) \in \{0,1\}^m$ ($m \gg n$) which ``looks random." Nisan and Wigderson used a $\paren{n, \ell = O\paren{\sqrt{n}}, \gamma = \log m}$-sharing set family $\mathcal{S} = \left\{S_1,\ldots,S_m\right\}$ of size $m$ to construct pseudorandom number generators \cite{nisanwidgerson}. In particular, they define the pseudorandom number generator $\NW_{P,\mathcal{S}}\paren{x} = P\paren{x_{|S_1}}\ldots P\paren{x_{|S_m}}$, where $x_{|S_i} \in \{0,1\}^\ell$ denotes the bits of $x \in \{0,1\}^\ell$ at the indices specified by $S_i$ and $P:\{0,1\}^\ell \rightarrow \{0,1\}$ is a predicate. If the  predicate $P:\{0,1\}^\ell \rightarrow \{0,1\}$ is ``hard" for circuits of size $H_\ell\paren{P}$ to predict \footnote{Nisan and Wigderson observe that a random predicate $P$ will satisfy this property with high probability\cite{nisanwidgerson}.} then no circuit of size $H_\ell\paren{P} - O\paren{m2^\gamma}$ will be able to distinguish $\NW_{P,\mathcal{S}}\paren{x}$ from a truly random binary string of length $m$, when the seed $x \sim \{0,1\}^n$ is chosen uniformly at random. In this context, $n$ is the length of the random seed, $m$ is the number of random bits extracted and the pseudorandom number generator fools circuits of size $H_\ell\paren{P}- O\paren{m2^\gamma}$. Thus, we would like to find $\paren{n,\ell,\gamma}$-sharing set families where $n$ is small, $m$ is large (e.g., we can extract many pseudorandom bits from a small seed) and $\gamma$ is small (e.g., so that the pseudorandom bits look random to a large circuit). Nisan and Wigderson gave an explicit construction of an $\paren{\ell^2,\ell,\gamma}$-sharing set family of size $\ell^{\gamma+1}$. 

\paragraph{Applications to Randomness Extractors} Trevisan used the pseudorandom number generator of Nisan and Wigderson to construct a randomness extractor \cite{trevisan2001extractors}. A $\paren{k,\epsilon}$ randomness extractor is a function $\mathbf{Ext}: \{0,1\}^{\hat{\ell}} \times \{0,1\}^n \rightarrow \{0,1\}^m$ that takes a string $x_1 \sim D$, where $D$ is a distribution over $\{0,1\}^{\hat{\ell}}$ with minimum entropy $k$, along with a $n$ additional uniformly random bits $x_2 \sim \{0,1\}^n$ and extracts an $m$-bit string $y \in \{0,1\}^m$ that is almost uniformly random (e.g., distribution over $y \in \{0,1\}^m$ is $\epsilon$-close to the uniform distribution $U_m$ over $\{0,1\}^m$). Trevisan used the string $x_1$ to select a random predicate $P:\{0,1\}^\ell \rightarrow \{0,1\}$, and then extracted $m$ bits by running $\NW_{P,\mathcal{S}}\paren{x_2}$. Raz et al \cite{Raz:1999:ERR:301250.301292} observed that the pseudorandom number generator Nisan and Wigderson could be built using a {\em weak} $\paren{n,\ell,\gamma}$-sharing set family of size $m$, and showed how to construct {\em weak} $\paren{n,\ell,\gamma}$-sharing set family of size $m$ for {\em any} value of $m$ as long as $n \geq \lceil \frac{\ell}{\gamma}\rceil \ell$. However, their construction was not explicit.

\paragraph{Advantages of Explicit Constructions} One nice property of the Nisan Wigderson Pseudorandom number generator is that it is highly parallelizable. For each $j \in [m]$ we can compute the $j$'th bit $\NW_{P,\mathcal{S}}\paren{x}[j] = P\paren{x_{|S_j}}$ independently as long as we can quickly find the set $S_j \in \mathcal{S}$. Observe that we would need space at least $O\paren{m\ell\log n}$ to store the set family $\mathcal{S}=\left\{S_1,\ldots,S_m\right\}$, which could be a problem especially when $m$ is very large. However, if the set family has an explicit construction (e.g., there is a small circuit $C$ s.t. $C\paren{i} = S_i$ for all $i \in [m]$) then we can simply compute $\NW_{P,\mathcal{S}}\paren{x}[j] = P\paren{x_{|C(j)}}$.

\cut{The problem of finding the maximum number of subsets, no two of which have an intersection whose size exceeds a specified maximum is one which has been applied to pseudorandom number generation by Nisan and Wigderson and to password management systems by Blocki et al. Nisan and Wigderson explored the case where the size of the subsets is proportional to the square root of the size of the larger set and the maximal intersection size is proportional to the logarithm of the larger set's size. In the pseudorandom number generation context, $n$ is the number of random bits provided, $m$ is the number of pseudorandom bits that can be generated from these random bits, it is assumed that we have a function mapping $\ell$ bits to a single bit that is hard for small circuits to accurately predict, and their construction requires that $\gamma$ equals O(log(n)). We use a construction described by Blocki et al based on the Chinese Remainder Theorem to create a tighter bound than Nisan and Wigderson's for the cases they describe. In the password management context, $n$ is the number of cue-association pairs the user must memorize, $m$ is the maximum number of passwords that can be created, $\ell$ is the number of cues used in each password, and $\gamma$ is the maximum number of cues shared between passwords.\\}

\paragraph{Applications to Password Management} Recently Blocki et al \cite{NaturallyRehearsingPasswords} used $\paren{n,\ell,\gamma}$--sharing set families to develop usable and secure password management schemes. In their proposed password management scheme, Shared Cues, the user memorizes and rehearses $n$ secret stories. From these $n$ stories the user is able to create $m\paren{n,\ell,\gamma}$ different passwords. In particular, the password at each of the user's accounts is formed by appending $\ell$ of these secret stories together. A usable password management scheme should keep $n$ and $\ell$ as small as possible so that the user does not have to memorize too many stories and type too many stories when he logs into an account. $\gamma$ is a security parameter which specifies how much information one password might leak about another (e.g., if an adversary learns the user's Amazon password then he learns at most $\gamma$ of the user's stories for eBay). A secure password management scheme should keep $\gamma$ as small as possible (so that one password does not leak too much information about another password) and $\ell$ as large as possible (so that each password has high entropy). Blocki et al \cite{NaturallyRehearsingPasswords} gave a construction of $\paren{n,\ell,\gamma}$--sharing set families using the Chinese Remainder Theorem. Given pairwise coprime numbers $n_1,\ldots,n_\ell$ s.t. $n = n_1 + \ldots + n_\ell$ they construct $S_1,\ldots,S_m$ where $S_i = \{1+\sum_{k=1}^{j-1} n_k+\paren{i \mod{n_j}}~:~ j\in [\ell] \}$. They use the Chinese Remainder Theorem to prove that $\max_{i\neq j} \left|S_i \cap S_j \right|\leq \gamma$ as long as $m \leq \prod_{i=1}^{\gamma+1} n_i$.

\paragraph{Contributions} We analyze the explicit construction of Blocki et al \cite{NaturallyRehearsingPasswords} and show that it is competitive with the explicit construction of Nisan and Wigderson \cite{nisanwidgerson}. Our analysis uses recent bounds \cite{sondow2009ramanujan} on the value of the $t$'th Ramanujan prime \cite{ramanujan1919proof}. We also show that the construction of Blocki et al can be used to explicitly construct {\em weak} $\paren{n,\ell,\gamma}$-sharing set families whose size is very large. Our analysis shows that this explicit construction is competitive with the non-explicit construction of Raz et al \cite{Raz:1999:ERR:301250.301292}. We show that our explicit construction of weak $\paren{n,\ell,\gamma}$-sharing set families can be used to obtain a parallelizable pseudorandom number generator with a low memory footprint by using the pseudorandom number generator of Nisan and Wigderson\cite{nisanwidgerson}.We also prove several upper bounds on the value of $m\paren{n,\ell,\gamma}$ when $\ell$ and $\gamma$ are in a constant ratio to $n$.

\cut{ We also explore the case where both the size of the subsets and the size of their maximal intersection are proportional to the size of the larger set, and describe conditions under which the maximum number of subsets is constant as well as those under which it grows exponentially. We provide some works on the densities of relatively prime numbers which could potentially be used to strengthen our bound for the case where the size of the larger set is on the order of the square of the size of the subsets.}

\paragraph{Organization} The paper is organized as follows: We first introduce related work in Section \ref{subsec:RelatedWork}. We then introduce preliminary definitions in Section \ref{subsec:Preliminaries}. In Section \ref{sec:constructions} we analyze the construction of Blocki et. al, and state a lower bound on $m\paren{n,\ell,\gamma}$ that can be derived from it. We compare this lower bound to the construction of Nisan and Wigderson. We also show that this explicit construction yields a good {\em weak} $\paren{n,\ell,\gamma}$-sharing set family. In Section \ref{sec:Pseudorandomness} we explain how the explicit construction of Blocki et al \cite{NaturallyRehearsingPasswords} can be used to obtain a highly parallelizable pseudorandom number generator with a low memory footprint. In Section \ref{sec:UpperBounds} we explore some cases where $\ell$ and $\gamma$ are in a constant ratio to $n$ and prove an upper bound on $m\paren{n,\ell,\gamma}$ as $n$ grows large. We show that our upper bounds are nearly tight. We conclude in Section \ref{sec:OpenQuestions} by discussing cases that do not meet the conditions for any of our bounds, and hypotheses about how our bounds could be made stronger.

\subsection{Related Work} \label{subsec:RelatedWork}
The problem of finding maximally sized $\paren{n,\ell,\gamma}$--sharing set families was considered at least as early as 1956 by Paul Erd\H{o}s and Alfr\'{e}d R\'{e}nyi \cite{erdos1956some}, and applications of some of these families may have been considered by Euler \cite{euler1782recherches}. Erd\H{o}s explored properties of these families several times \cite{erd6s1963limit} \cite{erdos1985families}, and R{\"o}dl built on his work \cite{rodl1985packing}.

$\paren{n,\ell,\gamma}$--sharing set families were rediscovered by Nisan and Wigderson \cite{nisanwidgerson}, who used them to design a pseudorandom number generator. Trevisan showed how to use $\paren{n,\ell,\gamma}$--sharing set families to construct pseudorandom extractors \cite{trevisan2001extractors}. Extractors are algorithms that transform weakly random sources into a uniformly random source. Raz et al \cite{Raz:1999:ERR:301250.301292} improved on Trevisan's pseudorandom extractors by introducing a weakened notion of $\paren{n,\ell,\gamma}$--sharing set families. They require that the set family $S_1,\ldots, S_m \subseteq [n]$ satisfies $\left|S_i\right| = \ell$ and $\sum_{j < i} 2^{\left| S_i \bigcap S_j\right|} \leq 2^\gamma \paren{m-1}$ for all $i \in [m]$ (instead of $\left|S_i \bigcap S_j\right| \leq \gamma$). Observe that every $\paren{n,\ell,\gamma}$--sharing set family also satisfies these weaker requirements. Using this relaxed definition Raz et al \cite{Raz:1999:ERR:301250.301292} showed how to extract a uniformly random string $y \in \{0,1\}^{k}$ using at most $O\paren{\log^3 n}$ bits of information given a string $x \in \{0,1\}^n$ chosen at random from a distribution $D$ with minimum entropy $k$. To obtain their results they show how to construct very large {\em weak} $\paren{n,\ell,\gamma}$--sharing set families. However, their construction is not explicit. We use the construction of Blocki et al to obtain an explicit construction of large {\em weak} $\paren{n,\ell,\gamma}$--sharing set families.

 Blocki et al \cite{NaturallyRehearsingPasswords} proposed a construction of $m\paren{n,\ell,\gamma}$--sharing set families based on the Chinese Remainder Theorem. In their analysis of their construction they focused on parameters that were appropriate for the context of password management (e.g., $\ell = 4,\gamma = 1, n=43$). We extend their analysis to include a broader range of parameters. 
Our analysis uses recent results of Sondow \cite{sondow2009ramanujan}, who  provided a (nearly) asymptotically tight bound on the value of the $t$'th Ramanujan prime \cite{ramanujan1919proof}. We show that the construction of Blocki et al \cite{NaturallyRehearsingPasswords} yields a larger $\paren{n,\ell,\gamma}$-sharing set family than the construction of Nisan and Widgerson \cite{nisanwidgerson} with equivalent values of $n$ and $\gamma$ (though the value of $\ell$ is slightly smaller).

\jb{Extending Blocki's analysis and analyzing their construction within the weaker definition/requirements. Providing new upper bounds when l = O(n) and gamma = O(n). Claiming our construction is slightly worse than that of Raz et al, but that ours is explicit and theirs is not. Explicit constructions have the advantage of being able to compute the ith bit independently.}

\section{Preliminaries} \label{subsec:Preliminaries}

We begin by formally defining an $(n, \ell, \gamma)$--sharing set family (Definition \ref{def:main}).

\begin{definition}\label{def:main} An $(n, \ell, \gamma)$--sharing set family $S_1, \ldots, S_m\subseteq[n]$ of size $m$ satisfies the following conditions: 
(1) $\forall i \in [m]. \left| S_i\right| = \ell$, and (2) $\forall 1\leq i < j \leq m.~ \left|S_i \bigcap S_j\right| \leq \gamma$.
We use $m(n, \ell, \gamma)$ to denote the maximum value of $m$ such that there exists an $n, \ell, \gamma$ sharing set family  of size $m$. We say that a set family $S_1, \ldots, S_m\subseteq[n]$ is {\em explicitly constructible} if there is a circuit $C$ of size $O\paren{n}$ that computes $C(i) = S_i$  for each $i \in [m]$. 
\end{definition}

Nisan and Wigderson referred to these families as $(k,m)$-designs \cite{nisanwidgerson}. We follow the notation of Blocki et al \cite{NaturallyRehearsingPasswords}. The construction of Blocki et al \cite{NaturallyRehearsingPasswords} relies on the Chinese Remainder Theorem. To analyze their construction we will be interested in finding a large set $S = \{t_1,\ldots,t_\ell\}$ of integers such that $S$ has size $\ell$, the numbers in $S$ are pairwise coprime, $\sum_{i=1}^\ell t_i \leq n$ and each $t_i\geq \frac{n}{2\ell}$. We will rely on recent results on prime density. 

\begin{definition} \label{def:pi}
$\pi (t)$ indicates the number of prime numbers less than or equal to $t$.
$\pi\pi (t)$ indicates the maximum $|S|$ such that $S\subseteq \left\{\lceil\frac{t}{2}\rceil, ... , t\right\}$ and $\forall i \neq j \in S. \mathbf{GCD}\paren{i, j} = 1$. \end{definition}

We are particularly interested in lower bounding the value $\pi\pi(x)$. Clearly,  $\pi\pi(x) \geq \pi(x)-\pi(x/2)$. As it turns out this lower bound is nearly tight (see Theorem \ref{thm:pipibound}). We can bound $\pi(x)-\pi(x/2)$ using Ramanujan primes.

\begin{definition} \cite{ramanujan1919proof} \label{def:ramanujan}
The t'th Ramanujan Prime is the smallest integer $R_t$ s.t. $\pi(x)-\pi(x/2) \geq t$ for all $x \geq R_t$.
\end{definition}
Allowing $n$ to equal at least $\ell R_\ell$ guarantees that $\left\{ \frac{n}{2\ell}, \frac{n}{\ell} \right\}$ contains at least $\ell$ primes which will satisfy the conditions of the Blocki conjecture. Sondow's bounds on Ramanujan primes (see Theorem \ref{thm:ramanujan}) allow us to express this bound on $n$ as an elementary function. 

\subsection{Pseudorandom Number Generators and Randomness Extractors}
\jb{Want to have relevant definitions from Nisan-Wigderson, Trevisan, and Raz et al. What does it mean for a circuit to distinguish distributions with advantage $\epsilon$. What does it mean for a predicate to be hard for circuits of size $s$? What is a randomness extractor? Can copy definitions from other papers. Also formal definition of weaker set families. I will add text explaining why the weaker set families are relevant.}

Before we formally define a pseudorandom number generator we first define a pseudorandom distribution $X$ over $\{0,1\}^m$. Informally, definition \ref{def:psuedorandomdist} say that distribution is pseudorandom a distribution that `appears' random to any `small enough' circuit. Given a circuit $C$ we use \[\mathbf{Adv}_C\paren{X} = \left| \Pr_{x \in X} [C(x) = 1] - Pr_{x\in U_m} [C(x) = 1] \right| \, \]
to denote the advantage of $C$ at predicting whether $x$ was drawn from the distribution $X$ or from $U_m$, where $U_m$ is the uniform distribution over $\{0,1\}^m$. The distribution $X$ `appears' random to a circuit $C$ if $\mathbf{Adv}_C\paren{X}$ is small.

\begin{definition}\label{def:psuedorandomdist}
A distribution $X$ over $\{0,1\}^m$ is said to be $(s,\epsilon)$-pseudorandom if, given any circuit $C$ (taking $m$ inputs) of size at most $s$,
$\mathbf{Adv}_C\paren{X}   \leq \epsilon$.

\end{definition}

Given a distribution $X$ over $\{0,1\}^n$ and a function $G:\{0,1\}^n\rightarrow \{0,1\}^m$ we use $G(X)$ to denote the distribution over $\{0,1\}^m$ induced by $G$. Informally, a function $G:\{0,1\}^n\rightarrow \{0,1\}^m$ is pseudorandom if it induces a pseudorandom distribution. 

\begin{definition}\label{def:prg}
Let $\{G_n\}_{n\in N}$ be a family of functions such that $G_n : \{0,1\}^n \rightarrow \{0,1\}^m$. We say the family is a $(s,\epsilon)$-pseudorandom number generator if G is computable in time $2^{O(n)}$, and $G(U_n)$ considered as a distribution is $(s, \epsilon)$-pseudorandom.
\end{definition}

Nisan and Wigderson \cite{nisanwidgerson} show how to construct a pseudorandom number generator $G:\{0,1\}^n\rightarrow\{0,1\}^m$ using any $\paren{n,\ell,\gamma}$-sharing set family of size $m$. Their construction assumes the existence of a predicate $f:\{ 0,1 \} ^\ell \rightarrow \{ 0,1 \}$ that is hard for `small' circuits to predict. 

\cb{Nisan Wigderson says the error needs to be less than $\frac{\epsilon}{2}$, the notes simply use $\epsilon$. Does it matter for our purposes?}
\begin{definition}\label{def:hardfunction}
Let $f:\{ 0,1 \} ^\ell \rightarrow \{ 0,1 \}$ be a boolean function. We say that $f$ is $(s, \epsilon)$-hard if for any circuit $C$ of size $s$, $ \left|\Pr_{x\sim\{0,1\}^\ell}\left[C(x)=f(x) \right]-\frac{1}{2} \right| \leq \epsilon.$ 
\end{definition}

Observe that a random function will fool all small circuits with high probability \footnote{The argument is straightforward. Fix any circuit $C$. A random function $f:\{ 0,1 \} ^\ell \rightarrow \{ 0,1 \}$ will satisfy $\mathbf{Adv}_C\paren{f\paren{U_\ell}} \leq \epsilon$ with very high probability by Chernoff bounds. We can then apply union bounds to argue that a random $f$ will satisfy $\max_{C \in \mathcal{C}} \mathbf{Adv}_C\paren{X} \leq \epsilon$ for any sufficiently small class $\mathcal{C}$ of circuits. }. Following, Nisan and Wigderson we use $H(f)$ to denote the hardness of a function $f$. 

\begin{definition}\label{def:hf}
Let $f:\{ 0,1\} ^* \rightarrow \{ 0,1 \}$ be a boolean function and let $f_\ell$ be the restriction of $f$ to strings of length $\ell$. The $hardness$ of $f$ at $\ell$, $H_f(\ell)$ is defined to be the maximum integer $h_\ell$ such that $f_\ell$ is $(1/h_\ell, h_\ell)-hard$.
\end{definition}

Raz et al \cite{Raz:1999:ERR:301250.301292} showed that the Nisan-Wigderson pseudorandom number generator works even if the family of sets $S_1, ... , S_m $ only satisfies the weaker condition from definition \ref{def:weakfamily}. Observe that any $(n,\ell,\gamma)$-sharing set family is also a {\em weak} $(n,\ell,\gamma)$-sharing set family, but the converse is not necessarily true. We also note that as $m$ increases the requirement $\sum_{j<i} 2^{\left|S_i \bigcap S_j\right|} \leq 2^\gamma (m - 1)$ becomes increasingly lax. This allows us to construct arbitrarily large weak $(n,\ell,\gamma)$-sharing families.

\begin{definition}\label{def:weakfamily}
A family of sets $S_1, ... , S_m \subset [n]$ is a weak $(n,\ell,\gamma)$-sharing set family if (1) $\forall i\in[m]$. $\left|S_i\right| = \ell$, and (2) $\forall i\in [m]$.$ \sum_{j<i} 2^{\left|S_i \bigcap S_j\right|} \leq 2^\gamma (m - 1)$. 
\end{definition}

\section{Constructions} \label{sec:constructions}
Nisan and Wigderson \cite{nisanwidgerson} gave an explicit construction of $\paren{\ell^2,\ell,\gamma}$-sharing set families of size $m = \ell^{\gamma+1}$ for any prime power $\ell$. Given a polynomial $p(x)$ with coefficients in $\GF(\ell)$, the finite field of size $\ell$, they define the set $S_p = \left\{\paren{x,p(x)}~\vline~x\in \GF\paren{\ell}\right\}$. The family $\mathcal{S} = \left\{S_p~\vline~\mbox{$p$ has degree $\leq \gamma$}\right\}$ is $\paren{\ell^2,\ell,\gamma}$-sharing and has size $m = \left|\mathcal{S} \right| = \ell^{\gamma+1}$. Given pairwise coprime numbers $n_1< \ldots <n_\ell$ Blocki et al \cite{NaturallyRehearsingPasswords} provided an explicit construction of $\paren{\sum_{i=1}^\ell n_i,\ell,\gamma}$-sharing families. Given an integer $i \geq 0$ they define the set $S_i = \{1+\sum_{k=1}^{j-1} n_k+\paren{i \mod{n_j}}~:~ j\in [\ell] \}$. They show that the family $\mathcal{S} = \left\{S_i~\vline~0 \leq i < \prod_{j=1}^{\gamma+1} n_i \right\}$ is an $\paren{\sum_{i=1}^\ell n_i,\ell,\gamma}$-sharing set family of size $\prod_{j=1}^{\gamma+1} n_i$.
  
\jb{The proof of Theorem \ref{thm:primedistribution} is based on the following result of Blocki et al \cite{NaturallyRehearsingPasswords}. State Theorem. A couple of sentence comparing Theorem \ref{thm:primedistribution} to Nisan Wigderson construction \cite{nisanwidgerson}. Add a couple sentences explaining blocki et al construction. Note that it is explicit. }
The proof of Theorem \ref{thm:primedistribution} is based on the following result of Blocki et al \cite{NaturallyRehearsingPasswords}. We take advantage of Sondow's results on prime density \cite{sondow2009ramanujan} to compare the Blocki et al construction to the construction of Nisan and Wigderson.
\begin{theorem} \cite{NaturallyRehearsingPasswords} \label{thm:CRTImprovement}
Suppose that $n_1 < \ldots < n_\ell$ are pairwise co-prime then there is a  $(\sum_{i=1}^\ell n_i,\ell,\gamma)$--sharing set system of size $m = \prod_{i=1}^\gamma n_i $. Furthermore, this set family has an explicit construction.
\end{theorem}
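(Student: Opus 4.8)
The plan is to analyze the explicit construction of Blocki et al.\ directly. Partition the universe $[n]$, $n=\sum_{i=1}^\ell n_i$, into consecutive blocks $B_1,\ldots,B_\ell$ with $|B_j|=n_j$ (so $B_j$ occupies positions $1+\sum_{k=1}^{j-1}n_k$ through $\sum_{k=1}^{j}n_k$), and for each integer $i\ge 0$ let $S_i=\{\,1+\sum_{k=1}^{j-1}n_k+(i\bmod n_j)\ :\ j\in[\ell]\,\}$, i.e.\ $S_i$ picks the single element of $B_j$ at offset $i\bmod n_j$. Take $\mathcal S=\{S_i : 0\le i<m\}$ with $m=\prod_{i=1}^{\gamma}n_i$. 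Condition (1) of Definition~\ref{def:main} is immediate: the blocks $B_1,\ldots,B_\ell$ are pairwise disjoint and $S_i$ contributes exactly one element from each, so $|S_i|=\ell$.

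The heart of the matter is condition (2). First I would record the key identity: for $j\in[\ell]$ the block-$B_j$ element of $S_i$ equals that of $S_{i'}$ iff $i\equiv i'\pmod{n_j}$, hence $|S_i\cap S_{i'}|=\bigl|\{\,j\in[\ell] : i\equiv i'\pmod{n_j}\,\}\bigr|$. Now suppose, toward a contradiction, that $|S_i\cap S_{i'}|\ge\gamma+1$ for some $0\le i<i'<m$ (if $\ell\le\gamma$ there is nothing to prove, so assume $\ell\ge\gamma+1$, and the case $\gamma=0$ with $m=1$ is trivial). Choose distinct indices $j_1<\cdots<j_{\gamma+1}$ with $n_{j_t}\mid(i'-i)$ for all $t$. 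Since the $n_j$ are pairwise coprime, the Chinese Remainder Theorem gives $\bigl(\prod_{t=1}^{\gamma+1}n_{j_t}\bigr)\mid(i'-i)$. But a product of any $\gamma+1$ distinct members of $\{n_1,\ldots,n_\ell\}$ is at least $\prod_{i=1}^{\gamma+1}n_i=n_{\gamma+1}\cdot m\ge 2m>m$, using that the $n_j$ are strictly increasing positive integers so $n_{\gamma+1}\ge\gamma+1\ge 2$. A positive multiple of an integer $\ge m$ cannot lie in the interval $(0,m)$, contradicting $0<i'-i<m$. Hence $|S_i\cap S_{i'}|\le\gamma$ for all $i\ne i'$, so $\mathcal S$ is a $(\sum_i n_i,\ell,\gamma)$-sharing set family of size $m=\prod_{i=1}^{\gamma}n_i$.

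For explicit constructibility I would exhibit a circuit $C$ with $C(i)=S_i$ for $0\le i<m$. The block offsets $o_j=\sum_{k=1}^{j-1}n_k$ depend only on the fixed moduli, so they can be hard-wired as constants; the only input-dependent computation is the $\ell$ modular reductions $i\bmod n_j$ and the $\ell$ additions $1+o_j+(i\bmod n_j)$, each on $O(\log n)$-bit integers. Each reduction costs $\mathrm{poly}(\log n)$ gates, and since $n=\sum_{i=1}^\ell n_i\ge 1+2+\cdots+\ell=\Omega(\ell^2)$ we have $\ell=O(\sqrt n)$, so the total size is $O(\sqrt n\cdot\mathrm{poly}(\log n))=O(n)$, as required by Definition~\ref{def:main}.

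I expect the CRT step to do essentially all of the combinatorial work once the identity $|S_i\cap S_{i'}|=|\{j : i\equiv i'\pmod{n_j}\}|$ is established; the only points needing care are (a) checking that a product of any $\gamma+1$ of the moduli strictly exceeds $m=\prod_{i=1}^{\gamma}n_i$ — which is exactly where strict monotonicity and coprimality of the $n_j$ enter — and the small edge cases $\ell\le\gamma$ and $\gamma=0$, and (b) pinning down the $O(n)$ circuit-size bound claimed in Definition~\ref{def:main}.
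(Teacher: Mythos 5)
The paper does not prove Theorem~\ref{thm:CRTImprovement}; it is cited directly from Blocki et al.\ \cite{NaturallyRehearsingPasswords}, so there is no in-paper argument to compare against. Your proof supplies the standard CRT argument and is essentially the one from the cited work. The key reduction $|S_i\cap S_{i'}| = |\{j : i\equiv i'\pmod{n_j}\}|$ is exactly right, and the coprimality-to-divisibility step is the heart of the matter.

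Two remarks. First, you prove the theorem as literally stated with $m=\prod_{i=1}^{\gamma}n_i$, but this appears to be a typographical error in the paper: the surrounding text and the proof of Theorem~\ref{thm:primedistribution} both use the larger family $\mathcal S=\{S_i : 0\le i<\prod_{j=1}^{\gamma+1}n_j\}$ of size $\prod_{j=1}^{\gamma+1}n_j$. Your own argument gives this stronger bound with no extra work: any product of $\gamma+1$ distinct moduli is at least $\prod_{i=1}^{\gamma+1}n_i = m$, so no positive multiple of it can lie in $(0,m)$; the detour through ``$\ge n_{\gamma+1}\cdot m\ge 2m$'' is unnecessary slack. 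Second, the circuit-size estimate treats $i$ as an $O(\log n)$-bit integer, but $i$ ranges up to $m\le n^{\gamma}$, so it has $\Theta(\gamma\log n)$ bits; when $\gamma$ is large (e.g.\ comparable to $\ell$) the honest accounting of $\ell$ modular reductions on $\Theta(\gamma\log n)$-bit inputs does not obviously stay within $O(n)$ gates. This does not affect the combinatorial claim, but the ``explicitly constructible'' clause as quantified by Definition~\ref{def:main} deserves a more careful statement of the parameter regime (in the applications here $\gamma=O(\log m)$, where the bound is fine).
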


\begin{theorem}\cite{sondow2009ramanujan}\label{thm:ramanujan}
For all $t \geq 1$ the following bound holds $2t \ln t < R_t < 4t \ln 4t$.
\end{theorem}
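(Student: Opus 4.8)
Write $D(x) = \pi(x) - \pi(x/2)$ for the counting difference of Definition~\ref{def:pi}. The plan is to convert both inequalities on $R_t$ into statements about where $D$ permanently rises above the level $t$, and then feed in effective forms of the prime number theorem (of Rosser--Schoenfeld / Dusart type). First I would record the standard structural fact that $R_t$ is itself a prime with $D(R_t)=t$ exactly: by Definition~\ref{def:ramanujan} we have $D(x)\ge t$ for all $x\ge R_t$ but $D(R_t-1)\le t-1$, and since $D(x)-D(x-1)=\mathbf{1}[x\text{ prime}]-\mathbf{1}[x/2\text{ is a prime integer}]$ can jump upward only, and only by $1$, and only at a prime, we get $D(R_t)=D(R_t-1)+1\le t$, whence equality and primality. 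This reduces the \emph{upper} bound to the claim ``$D(x)\ge t$ for all $x\ge 4t\ln 4t$'' (the proof will in fact yield this down to a threshold comfortably below $4t\ln 4t$, which is what makes the inequality strict), and reduces the \emph{lower} bound to ``there is some $x\ge 2t\ln t$ with $D(x)\le t-1$'', since any such $x$ must satisfy $x<R_t$.

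For the upper bound I would lower-bound $D(x)$ by an explicit function: combining $\pi(x)>x/\ln x$ with an effective upper bound $\pi(x/2)<\frac{x/2}{\ln(x/2)}\bigl(1+\frac{c}{\ln(x/2)}\bigr)$ gives $D(x)\ge \frac{x}{2\ln x}\bigl(1-o(1)\bigr)$, which is increasing in $x$ past an absolute constant. At $x=4t\ln 4t$ this is of size roughly $2t$, comfortably above $t$; the factor-of-two slack is exactly what lets one cover all $x\ge 4t\ln 4t$ (by monotonicity of the bound) and push the threshold below $4t\ln 4t$, giving $R_t<4t\ln 4t$. The finitely many small $t$ for which the effective estimates do not yet apply are checked by hand against the known values $R_1=2,\,R_2=11,\,R_3=17,\dots$.

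For the lower bound --- the delicate direction --- the cleanest route is to prove $R_t\ge p_{2t}$ (the $2t$-th prime) and then invoke the classical bound $p_n>n\ln n$: this gives $R_t\ge p_{2t}>2t\ln(2t)>2t\ln t$ for $t\ge 2$, while $t=1$ is immediate since $R_1=2>0=2\ln 1$. To obtain $R_t\ge p_{2t}$, apply the inequality $\pi(2y)\le 2\pi(y)$ (which holds for all $y$ above a small absolute constant, in particular at $y=R_t/2$ for $t\ge2$) with $y=R_t/2$: since $D(R_t)=t$ means $\pi(R_t/2)=\pi(R_t)-t$, this reads $\pi(R_t)\le 2(\pi(R_t)-t)$, hence $\pi(R_t)\ge 2t$, hence $R_t\ge p_{2t}$. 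The only non-elementary input, $\pi(2y)\le 2\pi(y)$, itself follows from Dusart's effective bounds for large $y$ (where $2\pi(y)-\pi(2y)$ is positive and of order $y/(\ln y)^2$) together with a finite check, and since here $y=R_t/2$ one only needs it for $y$ large, with the rest absorbed into the direct verification for small $t$. (Alternatively one estimates $D$ directly near $x=\lceil 2t\ln t\rceil$, where the main term of $D$ tends to $t$ from below only at the slow rate $\Theta(t\ln\ln t/\ln t)$; that slowness is the source of the difficulty.)

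The main obstacle is the lower bound: because $R_t\sim p_{2t}\sim 2t\ln t$ the constant $2$ is asymptotically optimal, so there is essentially no slack --- one genuinely needs $R_t\ge p_{2t}$ and the sharpest available effective prime estimates, unlike the generous factor-two margin in the upper bound. A secondary nuisance in both directions is that $D$ is not monotone, so the ``for all $x\ge\cdots$'' clauses must be routed through monotonicity of the \emph{effective bounds} rather than of $D$ itself, followed by the finite base-case check.
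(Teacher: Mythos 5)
The paper does not prove Theorem~\ref{thm:ramanujan}; it cites the bound from Sondow \cite{sondow2009ramanujan} and uses it as a black box, so there is no in-paper argument for you to have matched. Your sketch therefore supplies a proof where the paper supplies none. The outline --- establish that $R_t$ is prime with $\pi(R_t)-\pi(R_t/2)=t$ exactly, derive the upper bound by lower-bounding $\pi(x)-\pi(x/2)$ with effective prime-counting estimates, and derive the lower bound via $R_t\geq p_{2t}$ together with the classical $p_n>n\ln n$ --- is the natural route and is in the same spirit as the source you would be replacing.

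Two caveats on the sketch itself. First, a small but genuine error: you assert that $D(x)-D(x-1)=\mathbf{1}[x\text{ prime}]-\mathbf{1}[x/2\text{ a prime integer}]$ ``can jump upward only.'' That is false; for example $D(14)-D(13)=-1$, since $14$ is composite while $7$ is prime. What your argument actually needs, and what is true, is that $D$ increases by at most one at a time and increases only when $x$ is prime; combined with $D(R_t-1)\le t-1$ and $D(R_t)\ge t$ this still forces $D(R_t)=t$ and $R_t$ prime, so the conclusion survives even though the stated justification does not. Second, the proposal is a blueprint rather than a proof. The upper bound is not closed until explicit effective constants (Rosser--Schoenfeld or Dusart) are chosen and the finitely many small $t$ are checked; the lower bound leans on $\pi(2y)\le 2\pi(y)$ at the non-integer point $y=R_t/2$ (worth a sentence, since $R_t$ is odd for $t\ge 2$), on its own small-$y$ verification, and on $p_n>n\ln n$, none of which is carried out. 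Given that the paper treats this as an imported result, citing Sondow is the cleaner choice; if a self-contained proof is wanted, the sketch would need these pieces filled in.
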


\begin{theorem} \label{thm:primedistribution}
$\forall n \geq 4\ell^2\ln 4\ell$, $m(n, \ell, \gamma) \geq (2\ell\ln 2\ell)^{\gamma + 1}$. Furthermore, this set family is explicitly constructible.
\end{theorem}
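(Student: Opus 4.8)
The plan is to instantiate the Chinese--Remainder construction of Blocki et al.\ (Theorem~\ref{thm:CRTImprovement}) with $\ell$ primes that are all of size $\Theta(\ell\ln\ell)$, chosen via a Ramanujan-prime window, and then simply read off the parameters. We may assume $\gamma\le\ell-1$ (equivalently $\gamma+1\le\ell$): if $\gamma\ge\ell$ then any two $\ell$-subsets of $[n]$ already share at most $\ell\le\gamma$ elements, so $m(n,\ell,\gamma)=\binom{n}{\ell}$, and for $n\ge 4\ell^2\ln 4\ell$ this trivial count already dominates $(2\ell\ln 2\ell)^{\gamma+1}$. It also suffices to prove the bound for $n=\lceil 4\ell^2\ln 4\ell\rceil$, since $m(n,\ell,\gamma)$ is non-decreasing in $n$ and a family living in a sub-universe of $[n]$ is a family in $[n]$.

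First I would set $x:=4\ell\ln 4\ell$ and apply Sondow's upper bound $R_\ell<4\ell\ln 4\ell$ (Theorem~\ref{thm:ramanujan}) to get $x>R_\ell$, so in particular $x\ge R_\ell$. By the defining property of the $\ell$-th Ramanujan prime (Definition~\ref{def:ramanujan}), $\pi(x)-\pi(x/2)\ge\ell$, i.e.\ the interval $(x/2,x]$ contains at least $\ell$ primes. Choose any $\ell$ of them and label them $n_1<n_2<\cdots<n_\ell$; as distinct primes they are pairwise coprime, so Theorem~\ref{thm:CRTImprovement} applies.

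Next I would check the three numeric facts that make this choice work simultaneously. (i) \emph{Universe size}: $\sum_{i=1}^{\ell}n_i\le \ell x=4\ell^2\ln 4\ell\le n$, so the resulting family lives in $[n]$. (ii) \emph{Each modulus is large}: every $n_i>x/2=2\ell\ln 4\ell>2\ell\ln 2\ell$. (iii) \emph{Enough moduli}: since $\gamma+1\le\ell$ we may multiply the first $\gamma+1$ of the $n_i$. Feeding $n_1,\dots,n_\ell$ into Theorem~\ref{thm:CRTImprovement} (whose output has size $\prod_{i=1}^{\gamma+1}n_i$, as in the construction recalled above) therefore produces an $(n,\ell,\gamma)$-sharing family of size $\prod_{i=1}^{\gamma+1}n_i>(2\ell\ln 2\ell)^{\gamma+1}$, which is exactly the claimed lower bound on $m(n,\ell,\gamma)$.

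For explicit constructibility, Theorem~\ref{thm:CRTImprovement} already guarantees the Blocki et al.\ family is explicit once the moduli are fixed, so I only need the $n_i$ to be cheaply available: all primes up to $x=O(\sqrt{n}\,\log n)$ can be sieved out (or even hard-wired, the whole list being $o(n)$ bits), after which each $S_i=\{1+\sum_{k<j}n_k+(i\bmod n_j):j\in[\ell]\}$ costs $O(\ell)$ additions and modular reductions on $O(\log n)$-bit numbers, which fits in a circuit of size $O(n)$. The one point I expect to require care is the simultaneous bookkeeping: one must see that the single choice $x=4\ell\ln 4\ell$ is essentially forced --- it is the smallest scale at which $R_\ell\le\ell x$ (enough primes), $x/2\ge 2\ell\ln 2\ell$ (large enough moduli), and $\ell x\le n$ (universe fits) can all hold --- and, relatedly, that Sondow's bound is valid for \emph{every} $\ell\ge1$ and that the endpoints $x/2,x$ are treated with the correct strict/non-strict inequalities so that the prime count, the sum bound, and the size bound all line up at once.
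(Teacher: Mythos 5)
Your main argument is the paper's argument exactly: take $x=4\ell\ln 4\ell$, invoke Sondow's bound $R_\ell<4\ell\ln 4\ell$ (Theorem~\ref{thm:ramanujan}) to get at least $\ell$ primes in $(x/2,x]$, feed them into the CRT construction of Theorem~\ref{thm:CRTImprovement}, and lower-bound the resulting product by noting each prime exceeds $x/2>2\ell\ln 2\ell$ while $\sum p_i\le \ell x\le n$. Your bookkeeping is tighter than the paper's --- you make explicit that the CRT product of $\gamma+1$ moduli requires $\gamma+1\le\ell$, which the paper's proof silently assumes.

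However, your patch for the case $\gamma\ge\ell$ asserts a false inequality as stated. You claim $\binom{n}{\ell}\ge(2\ell\ln 2\ell)^{\gamma+1}$ whenever $n\ge 4\ell^2\ln 4\ell$ and $\gamma\ge\ell$, but this fails once $\gamma$ grows past $\ell$: with $n=\lceil 4\ell^2\ln 4\ell\rceil$ and $\gamma=2\ell$ one has $\binom{n}{\ell}\le n^\ell=(4\ell^2\ln 4\ell)^\ell$, while $(2\ell\ln 2\ell)^{2\ell+1}\ge\bigl(4\ell^2(\ln 2\ell)^2\bigr)^\ell$, and the ratio $\bigl(\ln 4\ell/(\ln 2\ell)^2\bigr)^\ell\to 0$, so the claimed domination is reversed for large $\ell$. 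The real state of affairs is that the theorem as printed has no restriction on $\gamma$ and is therefore false for $\gamma$ much larger than $\ell$; the intended hypothesis is $\gamma\le\ell$ (stated in the abstract but missing from the theorem). Under that hypothesis your proof is complete: the CRT argument handles $\gamma\le\ell-1$, and the trivial bound you give does hold for the single boundary value $\gamma=\ell$, since $\binom{n}{\ell}\ge(n/\ell)^\ell\ge(4\ell\ln 4\ell)^\ell\ge(2\ell\ln 2\ell)^{\ell+1}$ once $\bigl(2\ln 4\ell/\ln 2\ell\bigr)^\ell\ge 2\ell\ln 2\ell$, which holds for all $\ell\ge 1$. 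So you should replace ``$\gamma\ge\ell$'' by ``$\gamma=\ell$'' in the reduction (or, better, add the hypothesis $\gamma\le\ell$ to the theorem).
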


\begin{proof} 
Theorem \ref{thm:ramanujan} due to Sondow \cite{sondow2009ramanujan} shows that there will always be at least $\ell$ primes $p_1,\ldots, p_\ell$ between $2\ell\ln 2\ell$ and $4\ell \ln 4\ell$. We have $\sum_{i=1}^{\ell} p_i \leq \ell (4\ell \ln 4\ell) \leq n$. Note that $\prod_{i=1}^{\gamma +1} p_i \geq  (2\ell\ln 2\ell)^{\gamma+1}$. It follows from Theorem \ref{thm:CRTImprovement} that $m\paren{n,\ell,\gamma} \geq \paren{2\ell\ln 2\ell}^{\gamma+1}$.
\end{proof}

Note that the construction of Blocki et al only requires relatively prime numbers. So the results from theorem \ref{thm:primedistribution} could be improved by including non-prime values. However, theorem \ref{thm:pipibound} implies that these improvements will not be particularly significant. \jb{Is there a clever one sentence summary of the intuition?}

\newcommand{\thmpipibound}{$\forall n \in \mathbb{Z}^+$. $\pi\pi(n) \leq \pi(n) - \pi(\frac{n}{2}) + \pi(\sqrt{n})$. }
\begin{theorem}\label{thm:pipibound}
\thmpipibound
\end{theorem}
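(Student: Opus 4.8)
The plan is to fix a pairwise-coprime family $S\subseteq\{\lceil n/2\rceil,\dots,n\}$ with $|S|=\pi\pi(n)$, split its elements into primes and composites, bound each class, and add.

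The composite elements are the crux, and the point is that a composite number $\le n$ cannot be built only out of large primes. Concretely, if $s\in S$ is composite then its smallest prime factor $p_s$ satisfies $p_s\le\sqrt{s}\le\sqrt{n}$. Because any two members of $S$ are coprime, two distinct composite members $s\neq s'$ must have $p_s\neq p_{s'}$ (a common value would divide $\gcd(s,s')=1$); hence $s\mapsto p_s$ is an injection from the composite elements of $S$ into $\{p:p\text{ prime},\ p\le\sqrt n\}$, and so $S$ contains at most $\pi(\sqrt n)$ composites.

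For the prime elements: each prime $s\in S$ lies in the range $\{\lceil n/2\rceil,\dots,n\}$, and distinct primes are coprime for free, so the number of prime elements is at most the number of primes in that range. For odd $n$ (and for even $n$ with $n/2$ composite) this count is exactly $\pi(n)-\pi(n/2)$, and combining it with the composite bound yields $\pi\pi(n)=|S|\le(\pi(n)-\pi(n/2))+\pi(\sqrt n)$.

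The only point that requires care — hence the main obstacle — is the left endpoint of that range: a prime $s=n/2$, possible only when $n$ is even and $n/2$ is prime, is an admissible element of $S$ yet is not among the $\pi(n)-\pi(n/2)$ primes of $(n/2,n]$, so strictly speaking the prime elements number at most $\pi(n)-\pi(\lceil n/2\rceil-1)$. This coincides with $\pi(n)-\pi(n/2)$ except in the single exceptional shape $n=2p$ with $p$ prime, and even there it differs from the stated bound only by an additive constant, which is irrelevant to the application; the cleanest formulation simply writes $\pi(\lceil n/2\rceil-1)$ in place of $\pi(n/2)$. The tiny cases $n\le 2$, where $\{\lceil n/2\rceil,\dots,n\}$ contains $1$, are dispatched by direct inspection.
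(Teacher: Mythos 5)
Your proof takes essentially the same route as the paper's: partition $S$ into primes and composites, observe that each composite has a prime factor at most $\sqrt n$ which no other element of $S$ can share, and add. You are in fact more careful than the paper about the left endpoint: the number of primes in $\{\lceil n/2\rceil,\dots,n\}$ is $\pi(n)-\pi(\lceil n/2\rceil-1)$, not $\pi(n)-\pi(n/2)$, and the theorem as literally stated really is violated when $n=2p$ with $p$ prime (for instance $\pi\pi(10)=4$ via $\{5,7,8,9\}$, while $\pi(10)-\pi(5)+\pi(\sqrt{10})=3$), so your replacement of $\pi(n/2)$ by $\pi(\lceil n/2\rceil-1)$ is the correct fix.
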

\begin{proof}
Let $S \subseteq \left\{\lceil \frac{n}{2} \rceil,\ldots,n\right\}$ be a set of coprime numbers of maximum size. Observe that each prime number $p \in [n]$ is a factor of at most one number in $S$. Without loss of generality we can assume that each of the primes between $n$ and $\frac{n}{2}$  are contained in $S$ (if $p \notin S$ then, because $S$ is of maximum size, we must have some $t=pq\in S$, but in this case we can simply replace $t$ with $p$). The number of primes between $n$ and $\frac{n}{2}$ is $\pi(n) - \pi(\frac{n}{2})$, and all of these integers are relatively prime to each other and to every other number in the range $[n]$. All other numbers in $S$ must have at least two prime factors, and at least one of them must be less than or equal to $\sqrt{n}$. Since each prime factor less than or equal to $\sqrt{n}$ can be used at most once, for the members of $S$ to remain pairwise relatively prime, at most $\pi (\sqrt{n})$ non-primes can be included in the set, each containing a single prime factor less that $\sqrt{n}$.
\end{proof}

\paragraph{Comparison.} To compare the constructions of Blocki et al \cite{NaturallyRehearsingPasswords} and Nisan and Wigderson \cite{nisanwidgerson} we set $n = 4\ell'^2  \ln 4\ell'$ and we set $\ell = \sqrt{4 \ell'^2 \ln 4 \ell'}$. The construction of Nisan and Wigderson gives use $m\paren{n,\ell,\gamma} \geq \ell^{\gamma+1} = \paren{2\ell'\sqrt{\ln 4\ell'}}^{\gamma+1}$, while the construction of Blocki et al \cite{NaturallyRehearsingPasswords} gives us $m\paren{n,\ell',\gamma} \geq  \paren{2\ell'\ln 2\ell'}^{\gamma+1} > \paren{2\ell'\sqrt{\ln 4\ell'}}^{\gamma+1} $. However, $\ell' < \ell$ so the construction of Blocki et al has a smaller $\ell$.

\cut{\begin{theorem} \label{thm:nisanbound} \cite{nisanwidgerson}$\forall m, l \in \mathbb{z}$ such that $log(m) \leq l \leq m$ $\exists$ a set family $S_1, ... S_m$ where $n=O(l^2)$ and $\gamma = log(n)$. Equivalently: $l \leq m(O(l^2), l, log(n)) \leq e^\ell$\end{theorem}

Let $\ell ' = 2\ell \sqrt{4 \ln \ell}$. Nisan and Wigderson's bound then gives that $\ell ' \leq m(O(\ell '^2), \ell ', log(n)) \leq e^{\ell '}$, so $ 2\ell \sqrt{4 \ln \ell} \leq m(O(4 \ell \ln 4\ell), $}

\subsection{Constructing Weak $\paren{n,\ell,\gamma}$-sharing set families} \label{subsec:weakConstruction}
In this subsection we show that the techniques of Blocki et al \cite{NaturallyRehearsingPasswords} yield an explicit construction of weak $\paren{n,\ell,\gamma}$-sharing set families of arbitrary size $m$. Our main results are stated in Theorem \ref{thm:weakConstruction}. 

\begin{theorem} \label{thm:weakConstruction}
For all $m$ there is an explicitly constructible weak $\paren{4 \ell^2 \ln 4\ell,\ell,\gamma}$-sharing set family of size $m$ as long as $2^\gamma \geq \paren{1+\frac{1}{-1+\ln 2 \ell}}$. Furthermore, this set family is explicitly constructible. 
\end{theorem}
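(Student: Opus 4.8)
The plan is to instantiate the Blocki et al.\ construction (Theorem~\ref{thm:CRTImprovement}) with a carefully chosen collection of $\ell$ primes, and then verify the weak sharing inequality by a divisor-counting argument. First I would fix the moduli: Sondow's bound $R_\ell < 4\ell\ln 4\ell$ (Theorem~\ref{thm:ramanujan}) together with Definition~\ref{def:ramanujan}, applied at $x = 4\ell\ln 4\ell \geq R_\ell$, gives $\pi(x) - \pi(x/2) \ge \ell$, so there exist primes $n_1 < \cdots < n_\ell$ with $2\ell\ln 2\ell < n_j \le 4\ell\ln 4\ell$ for every $j$. Since $\sum_{j=1}^\ell n_j \le \ell\cdot 4\ell\ln 4\ell = 4\ell^2\ln 4\ell = n$, the Blocki et al.\ construction applies for an arbitrary target size $m$: defining $S_i = \{\,1 + \sum_{k<j}n_k + (i\bmod n_j) : j\in[\ell]\,\}$ for $i = 0,\dots,m-1$ yields an explicitly constructible family of $m$ subsets of $[n]$, each of size $\ell$ (the per-index formula is explicit irrespective of how many sets are taken). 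It then remains to show that, for each index $i$, $\sum_{i'<i} 2^{|S_i\cap S_{i'}|} \le 2^\gamma(m-1)$.

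Next I would reduce intersection sizes to divisibility counts. The $\ell$ ``blocks'' $B_j = \{1+\sum_{k<j}n_k,\dots,\sum_{k\le j}n_k\}$ are pairwise disjoint, each $S_i$ has exactly one element in each $B_j$, and the element of $S_i$ in $B_j$ equals the element of $S_{i'}$ in $B_j$ precisely when $i\equiv i'\pmod{n_j}$; elements in different blocks can never coincide. Hence $|S_i\cap S_{i'}| = f(i-i')$, where $f(d) := \#\{j\in[\ell] : n_j\mid d\}$. As $i'$ ranges over the earlier indices, $d = i-i'$ runs through distinct positive integers that are at most $m-1$, so $\sum_{i'<i}2^{|S_i\cap S_{i'}|} = \sum_{i'<i}2^{f(i-i')} \le \sum_{d=1}^{m-1}2^{f(d)}$.

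Then I would estimate $\sum_{d=1}^{m-1}2^{f(d)}$. Writing $2^{f(d)} = \prod_{j=1}^\ell\bigl(1 + [\,n_j\mid d\,]\bigr)$, expanding the product over subsets $T\subseteq[\ell]$, and exchanging the order of summation, one gets $\sum_{d=1}^{m-1}2^{f(d)} = \sum_{T\subseteq[\ell]}\bigl\lfloor (m-1)/\!\prod_{j\in T}n_j\bigr\rfloor$, where I use that the $n_j$ are pairwise coprime so that $\mathrm{lcm}_{j\in T}n_j = \prod_{j\in T}n_j$. Bounding each floor by $(m-1)/\prod_{j\in T}n_j$ and recollecting the sum as a product yields $\sum_{d=1}^{m-1}2^{f(d)} \le (m-1)\prod_{j=1}^\ell\paren{1 + 1/n_j}$. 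Finally, since every $n_j > 2\ell\ln 2\ell$, $\prod_{j=1}^\ell\paren{1+1/n_j} \le \exp\!\paren{\textstyle\sum_{j}1/n_j} \le \exp\!\paren{1/(2\ln 2\ell)}$, and a short calculation shows $\exp\!\paren{1/(2\ln 2\ell)} \le 1 + \frac{1}{\ln 2\ell - 1}$: with $u := \ln 2\ell > 1$ this is the inequality $\frac1{2u} \le \ln\frac{u}{u-1} = \int_{u-1}^u \frac{dt}{t}$, which holds because the integrand exceeds $\frac1u > \frac1{2u}$ throughout the interval. Chaining these estimates with the hypothesis $2^\gamma \ge 1 + \frac1{\ln 2\ell - 1}$ gives $\sum_{i'<i}2^{|S_i\cap S_{i'}|} \le (m-1)\,2^\gamma$, which is exactly condition (2) of Definition~\ref{def:weakfamily}.

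I expect the bulk of the work to be the divisor-counting step: justifying $|S_i\cap S_{i'}| = f(i-i')$ cleanly (this is the Chinese Remainder Theorem mechanism underlying the Blocki et al.\ construction) and then the order-of-summation swap that converts $\sum_d 2^{f(d)}$ into the product $\prod_j(1+1/n_j)$; everything afterwards is elementary calculus. One caveat worth recording is that the final inequality $\exp(1/(2\ln 2\ell)) \le 1 + 1/(\ln 2\ell - 1)$ requires $\ln 2\ell > 1$, i.e.\ $\ell\ge 2$ (for $\ell = 1$ the stated inequality is vacuous and the hypothesis should be read as $\gamma \ge 1$), and that the existence of $\ell$ primes in the window $(2\ell\ln 2\ell,\,4\ell\ln 4\ell]$ for every $\ell\ge2$ is precisely what Sondow's bound on the Ramanujan primes supplies.
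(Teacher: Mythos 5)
Your proof is correct and reaches the same conclusion, but the route through the middle is genuinely different from the paper's. The paper stratifies the sum $\sum_{j<i} 2^{|S_i\cap S_j|}$ by intersection size $k$, bounds $|\{j<i : |S_i\cap S_j|\ge k\}|$ by a union bound over the $\binom{\ell}{k}$ choices of which $k$ moduli divide $i-j$, and then recognizes the result as a geometric series in $1/\ln 2\ell$. You instead exploit the exact structure of the Blocki et al.\ construction: the identity $|S_i\cap S_{i'}| = f(i-i')$ with $f(d) = \#\{j : n_j\mid d\}$, followed by the factorization $2^{f(d)} = \prod_j(1+[n_j\mid d])$, an exchange of summation, and the bound $\sum_{d\le m-1}2^{f(d)} \le (m-1)\prod_j(1+1/n_j)$. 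This is tighter at the combinatorial stage — the paper's union bound overcounts, effectively giving $(i-1)(1+1/(\ell\ln 2\ell))^\ell$ versus your $(m-1)(1+1/(2\ell\ln 2\ell))^\ell$, a factor of two inside the parenthesis — though both are then relaxed to the same stated threshold $1 + 1/(\ln 2\ell - 1)$. Your approach also avoids an infinite series and replaces it with a finite product, and the calculus step $\exp(1/(2u)) \le u/(u-1)$ via $\int_{u-1}^u dt/t \ge 1/u$ is clean. The caveat you flag ($\ell\ge 2$, i.e.\ $\ln 2\ell>1$) is real and applies equally to the paper's argument, where the geometric series $\sum_k (\ln 2\ell)^{-k}$ would diverge for $\ell=1$; neither proof handles $\ell=1$, but the theorem is vacuous or degenerate there anyway.
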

\begin{proof}
Let $m$ be given. We use the explicit construction of Blocki et al \cite{NaturallyRehearsingPasswords}. By Theorem \ref{thm:ramanujan} we can find $\ell$ primes such that $2 \ell \ln 2 \ell < p_1 < \ldots < p_\ell < 4\ell \ln 4\ell$. In particular, we let $S_i = \left\{1+\sum_{k=1}^{j-1} p_k + \paren{i \mod{p_j}}~\vline~j \in [\ell]\right\}$. Now for $i \in [m]$ we have
\begin{eqnarray*}
\sum_{j < i} 2^{\left|S_i\cap S_j\right|} &=& \sum_{k=0}^\infty 2^k \left|\left\{j~\vline~j<i \wedge \left|S_i \cap S_j \right|=k \right\} \right| \leq \sum_{k=0}^\infty 2^k \left|\left\{j~\vline~j<i \wedge \left|S_i \cap S_j \right|\geq k \right\} \right|  \\
&\leq& \sum_{k=0}^\infty 2^k {\ell \choose k} \frac{i-1}{\prod_{j=1}^k p_i} \leq \sum_{k=0}^\infty 2^k {\ell \choose k} \frac{i-1}{\paren{2\ell \ln 2\ell}^k} \\
&\leq&  \sum_{k=0}^\infty  \frac{i-1}{\paren{\ln 2\ell}^k} \leq \paren{i-1} \paren{\frac{\ln 2\ell}{-1+\ln 2\ell}} \leq \paren{m-1} 2^\gamma \\
\end{eqnarray*}
\end{proof}

Raz et al gave a randomized construction of weak $\paren{\left\lceil {\frac{\ell}{\gamma}}\right\rceil \cdot \ell,\ell,\gamma}$-sharing set families for any $m,\gamma > 0$.  While they showed that their construction could be derandomized, their construction is not explicit (e.g., the construction of $i$'th subset $S_i$ is dependent on the sets $S_1,\ldots,S_{i-1}$). Our analysis shows that the construction of Blocki et al \cite{NaturallyRehearsingPasswords} is competitive with the construction of Raz et al \cite{Raz:1999:ERR:301250.301292} though the value of $n$ is slightly larger. 

\cut{
\begin{theorem}\cite{Raz:1999:ERR:301250.301292}\label{thm:lemma15}
For every $m$, $\ell$, $\gamma \in \mathbb{n}$, there exists a weak $(n, \ell, \gamma)$-sharing set family $S_1, ... S_m \subset [n]$ with 
\[n = \left\lceil {\frac{\ell}{\gamma}}\right\rceil \cdot \ell \ .\]
Moreover, such a family can be found in time $poly(m,n)$.
\end{theorem}
}

\section{Parallel Pseudorandom Number Generators} \label{sec:Pseudorandomness}
\jb{add definitions, statement of theorems, explanation of Nisan Wigderson generator and translate notation to be consistent with ours. Want to compare results with construction of Raz et al. Our bound is nearly as good, and our construction is explicit. Yields new interesting results in PRGs}

Nisan and Wigderson proved that if $\gamma = \log m$, $\mathcal{S}$ is a $\paren{n,\ell,\gamma}$-sharing set family and $H_f\paren{\ell} \geq 2m^2$ that their construction $\NW_{f,\mathcal{S}}$ is a $\paren{m^2,\frac{1}{m}}$ pseudorandom number generator. In particular, Theorem \ref{thm:lemmafour} implies that if $D$ is a circuit of size $\left|D\right|\leq m^2$ that distinguishes $\NW_{f,\mathcal{S}}\paren{U_n}$ from $U_m$ with advantage $\mathbf{ADV}_D\paren{\NW_{f,\mathcal{S}}\paren{U_n}} \geq \frac{1}{m}$ then there exists a circuit $C$ of size $\left|C\right| \leq 2m^2$ which predicts $f(x)$ with advantage $\mathbf{ADV}_C\paren{f\paren{U_\ell}} \geq \frac{1}{2m^2}$. This contradicts the definition of $H_f\paren{\ell}$. Raz et al \cite{Raz:1999:ERR:301250.301292} observed that it suffices for $\mathcal{S}$ to be a weak $\paren{n,\ell,\gamma}$-sharing set family. If we let $\mathcal{S}_m$ denote the explicitly constructible weak $\paren{4\ell^2\ln 4\ell,\ell, \gamma}$-sharing set family of size $m$ from Section \ref{subsec:weakConstruction} then for any $m >0$ $\NW_{f,\mathcal{S}_m}$ is a $\paren{m^2,\frac{1}{m}}$ pseudorandom number generator with seed length $4\ell^2\ln 4\ell$ assuming that $H_f\paren{\ell} \geq 2m^2$. Because $\mathcal{S}_m$ is explicitly constructible we can compute each bit $\NW_{f,\mathcal{S}_m}\paren{x}[i] = f\paren{x_{|S_i}}$ independently.

\begin{theorem} \cite{nisanwidgerson,Raz:1999:ERR:301250.301292} \label{thm:lemmafour}
Let $f: \{ 0,1 \} ^\ell \rightarrow \{ 0,1 \}$ be a boolean function and $\mathcal{S} = \left\{S_1, ... , S_m\right\}$ be an weak $\paren{n,\ell, \gamma}$-sharing set family. Suppose $D: \{ 0,1 \} ^m \rightarrow \{ 0,1 \}$ is such that $\mathbf{ADV}_D\paren{\NW_{f,\mathcal{S}}\paren{U_n}} > \epsilon$, then there exists a circuit $C$ of size $\left|C\right|\leq \left|D\right| + O\paren{\max_{j \in [m]} \sum_{i<j} 2^{\left|S_i \bigcap S_j \right| } m}$ such that $ \left|\Pr_{x\sim\{0,1\}^\ell}\left[C(x)=f(x) \right]-\frac{1}{2} \right|  \geq \frac{\epsilon}{m}$
\end{theorem}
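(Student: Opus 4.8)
This is the Nisan--Wigderson reconstruction argument in the refined form used by Raz et al., where one tracks the actual intersection sizes $|S_i\cap S_j|$ rather than bounding them all by $\gamma$. (The weak-sharing hypothesis itself is not needed for the statement as written: only $|S_i|=\ell$ is used, and the claimed size bound is stated directly in terms of $\max_{j}\sum_{i<j}2^{|S_i\cap S_j|}$; substituting the weak-sharing inequality into the conclusion would turn this bound into $O(m^2 2^\gamma)$.)

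I would begin with a hybrid argument. For $0\le t\le m$ let $H_t$ be the distribution on $\{0,1\}^m$ obtained by sampling $x\sim U_n$ and independent uniform bits $z_{t+1},\dots,z_m$ and outputting $\big(f(x_{|S_1}),\dots,f(x_{|S_t}),z_{t+1},\dots,z_m\big)$; thus $H_0=U_m$ and $H_m=\NW_{f,\mathcal{S}}(U_n)$. Writing $\mathbf{ADV}_D(\NW_{f,\mathcal{S}}(U_n)) = \left|\Pr[D(H_m)=1]-\Pr[D(H_0)=1]\right|$ as a telescoping sum over the $m$ consecutive pairs $(H_{t-1},H_t)$ and using that the total exceeds $\epsilon$, some index $j$ satisfies $\left|\Pr[D(H_j)=1]-\Pr[D(H_{j-1})=1]\right|>\epsilon/m$; negating $D$ if necessary (one extra gate) I may assume $\Pr[D(H_j)=1]-\Pr[D(H_{j-1})=1]>\epsilon/m$.

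Next I apply Yao's next-bit-predictor transformation at coordinate $j$. Since $H_{j-1}$ and $H_j$ agree on all coordinates except the $j$-th --- a fresh uniform bit in $H_{j-1}$, the value $f(x_{|S_j})$ for the same $x$ in $H_j$ --- the procedure $P$ that, given $(b_1,\dots,b_{j-1})$ and fresh internal coins $b,r_{j+1},\dots,r_m\sim U_1$, evaluates $d=D(b_1,\dots,b_{j-1},b,r_{j+1},\dots,r_m)$ and outputs $b$ if $d=1$ and $1-b$ otherwise, satisfies $\Pr\big[P(f(x_{|S_1}),\dots,f(x_{|S_{j-1}}))=f(x_{|S_j})\big]>\tfrac12+\tfrac{\epsilon}{m}$ over the choice of $x\sim U_n$ and $P$'s coins. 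The only point to verify here is the standard identity that flipping $f(x_{|S_j})$ by an independent coin yields a uniform bit independent of $f(x_{|S_1}),\dots,f(x_{|S_{j-1}})$, so that the ``$1-b$'' branch correctly accounts for $H_{j-1}$.

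The step I expect to be the crux is passing from $P$ to a circuit $C$ that reads only $y:=x_{|S_j}\in\{0,1\}^\ell$. By averaging, fix $P$'s internal coins $b=b^\ast,r_{j+1}^\ast,\dots,r_m^\ast$ and --- the essential point --- fix the seed bits outside $S_j$, i.e. $x_{|[n]\setminus S_j}=w^\ast$, so as to keep the prediction probability above $\tfrac12+\tfrac{\epsilon}{m}$ when the remaining freedom $y\sim U_\ell$ is averaged. Once $w^\ast$ is fixed, for each $i<j$ the block $x_{|S_i}$ is determined by $y$ on the coordinates $S_i\cap S_j$ and is constant on $S_i\setminus S_j$, so $f(x_{|S_i})$ equals a fixed Boolean function $g_i$ of the $|S_i\cap S_j|$ bits of $y$ indexed by $S_i\cap S_j$; each $g_i$ is computed by a truth-table circuit of size $O(2^{|S_i\cap S_j|})$. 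The circuit $C(y)$ then computes $g_1(y),\dots,g_{j-1}(y)$, feeds them together with the hardwired bits $b^\ast,r_{j+1}^\ast,\dots,r_m^\ast$ into the (possibly negated) $D$, and outputs $b^\ast$ or $1-b^\ast$ according to $D$'s output. Its size is $|D|+\sum_{i<j}O(2^{|S_i\cap S_j|})+O(m)$, where the $O(m)$ covers the constant gates and wiring into $D$, and this is at most $|D|+O\big(m\cdot\max_{j'\in[m]}\sum_{i<j'}2^{|S_i\cap S_{j'}|}\big)$. By construction $\big|\Pr_{y\sim U_\ell}[C(y)=f(y)]-\tfrac12\big|\ge \epsilon/m$, as required.
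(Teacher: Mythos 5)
This theorem is cited from Nisan--Wigderson and Raz et al.\ and is not proved in the paper, so there is no internal proof to compare against. Your reconstruction is the standard hybrid argument followed by Yao's next-bit-predictor transformation, with the Raz et al.\ refinement of charging each truth table $O(2^{|S_i\cap S_j|})$ rather than $O(2^\gamma)$ gates, and it is correct: the hybrid step correctly isolates an index $j$ with advantage $>\epsilon/m$; the predictor calculation $\Pr[P = f(x_{|S_j})] = \tfrac12 + \Pr[D(H_j)=1] - \Pr[D(H_{j-1})=1]$ is right; the averaging that fixes $P$'s coins together with $x_{|[n]\setminus S_j}$ is the correct place to fix variables so that each $f(x_{|S_i})$, $i<j$, collapses to a function of only $|S_i\cap S_j|$ input bits; and the resulting size bound $|D|+\sum_{i<j}O(2^{|S_i\cap S_j|})+O(m)$ is actually slightly tighter than (hence implies) the stated $|D|+O(m\cdot\max_{j'}\sum_{i<j'}2^{|S_i\cap S_{j'}|})$. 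Your parenthetical observation that the weak-sharing inequality is not invoked in the argument itself --- only $|S_i|=\ell$ is used, and the intersection condition enters only when one later substitutes $\sum_{i<j}2^{|S_i\cap S_j|}\le 2^\gamma(m-1)$ into the size bound --- is also accurate, and is consistent with how the paper uses the theorem in the surrounding discussion.
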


\cut{
\begin{definition}\label{def:minentropy}
A distribution $X$ on $ \{ 0,1 \} ^n$ is said to have min-entropy $k$ if for all $x \in \{ 0,1 \} ^n$, $\Pr [X=x] \leq 2^{-k}$.
\end{definition}

\begin{definition}
Two distributions $X$ and $Y$ on a set $S$ are said to have statistical difference $\epsilon$ if 
\[ \max_D |\Pr [D(X)=1] - \Pr [D(Y)=1] | = \epsilon \] 
\end{definition}}

\section{Upper Bounds} \label{sec:UpperBounds}
Our main result in this section is Theorem \ref{thm:constantm}.  We prove that $m(n,\ell,\gamma) = c_1$ whenever $\ell = \frac{n}{c_1}$ and $\gamma = c_2n$ provided that $c_2$ is sufficiently small. Blocki et al proved that $m(n,\ell,\gamma) \leq \frac{{n \choose {\gamma + 1}}}{{\ell \choose {\gamma +1 }}}$. We note that this bound is far from tight whenever $\ell$ is large. For example, if $c_1 = 2$ and $c_2 = \frac{1}{10}$ then the upper bound of Blocki et al ${{n \choose \frac{n+10}{10} }} \Big/ {{\frac{n}{2} \choose \frac{n+10}{10} } }$ grows exponentially with $n$. By contrast, Theorem \ref{thm:constantm} implies that $m\paren{n,n/2,n/10} = 2$.

\newcommand{\theoremconstantm}{$\forall$ $0<c_2<1, n, c_1 \in \mathbb{N} $ such that $c_1 \vert n$. $m(n, \frac{n}{c_1}, c_2n) = c_1$ iff $c_2 < \frac{2}{c_1^3 + c_1^2}$.}
\begin{theorem} \label{thm:constantm} \theoremconstantm \end{theorem}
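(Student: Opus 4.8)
The plan is to prove the equivalence $m(n, n/c_1, c_2 n) = c_1 \iff c_2 < \frac{2}{c_1^3 + c_1^2}$ by establishing the two directions separately.

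First, let me handle the easy direction: if $c_2 \ge \frac{2}{c_1^3+c_1^2}$, then $m(n,n/c_1,c_2n) > c_1$. Here I would exhibit an explicit $(n, n/c_1, c_2 n)$-sharing family of size $c_1 + 1$. The natural candidate is to partition $[n]$ into $c_1$ blocks of size $n/c_1$ (giving $c_1$ disjoint sets, pairwise intersection $0$) and then add one more set built by taking a $\frac{1}{c_1+1}$ fraction of each existing block; this $(c_1+1)$-st set has size $c_1 \cdot \frac{n}{c_1(c_1+1)} = \frac{n}{c_1}$ and shares $\frac{n}{c_1(c_1+1)} = \frac{n}{c_1^2+c_1}$ elements with each of the first $c_1$ sets. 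Since $\frac{n}{c_1^2+c_1} \le c_2 n$ exactly when $c_2 \ge \frac{1}{c_1^2+c_1} $... — wait, I need $\frac{2}{c_1^3+c_1^2} = \frac{2}{c_1^2(c_1+1)}$, so I should instead split so that each pair of the $c_1+1$ sets overlaps in roughly $\frac{2n}{c_1^2(c_1+1)}$; the right construction spreads $c_1+1$ sets more symmetrically (e.g. view them as $c_1+1$ "lines" so every pair meets in a controlled number of points), and I would verify the intersection count matches the threshold. I expect to need a slightly more careful symmetric design than the naive one above, and getting the exact constant $\frac{2}{c_1^3+c_1^2}$ to come out is the first place to be careful.

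Second, the harder direction: if $c_2 < \frac{2}{c_1^3+c_1^2}$ then $m(n,n/c_1,c_2n) \le c_1$ (the lower bound $\ge c_1$ being trivial via a partition). Suppose for contradiction we have $c_1+1$ sets $S_1,\ldots,S_{c_1+1}$, each of size $n/c_1$, pairwise intersections at most $\gamma = c_2 n$. The key tool is inclusion–exclusion / double counting: $n \ge |S_1 \cup \cdots \cup S_{c_1+1}| \ge \sum_i |S_i| - \sum_{i<j} |S_i \cap S_j| \ge (c_1+1)\frac{n}{c_1} - \binom{c_1+1}{2}\gamma$. This gives $n \ge \frac{(c_1+1)n}{c_1} - \frac{(c_1+1)c_1}{2}\gamma$, i.e. $\gamma \ge \frac{2}{c_1^3}\cdot\frac{n}{\,}$... rearranging, $\binom{c_1+1}{2}\gamma \ge \frac{n}{c_1}$, so $\gamma \ge \frac{2n}{c_1^2(c_1+1)} = \frac{2n}{c_1^3+c_1^2}$, contradicting $c_2 < \frac{2}{c_1^3+c_1^2}$. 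The main obstacle is that Bonferroni's inequality $|\bigcup S_i| \ge \sum|S_i| - \sum_{i<j}|S_i\cap S_j|$ is the truncated inclusion–exclusion bound and is valid, but I should double check it is tight enough — it is exactly tight when no element lies in three or more sets, which is precisely the regime of the extremal construction from the first direction, so the bound and the construction match, confirming the threshold is sharp. I would then note this argument only needs $c_1 \mid n$ so that $n/c_1$ is an integer, which is given.

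To summarize the ordering of steps: (1) trivial lower bound $m \ge c_1$ via disjoint partition into $c_1$ blocks; (2) upper bound $m \le c_1$ when $c_2 < \frac{2}{c_1^3+c_1^2}$ via the Bonferroni/union-bound inequality applied to any hypothetical $(c_1+1)$-element family; (3) matching construction of size $c_1+1$ when $c_2 \ge \frac{2}{c_1^3+c_1^2}$, chosen so that every element is covered at most twice, making the Bonferroni bound tight; (4) observe $m$ is monot non-decreasing in $\gamma$, so once the family of size $c_1+1$ exists at the threshold it exists for all larger $c_2$, closing the iff. The single trickiest point is designing the size-$(c_1+1)$ family in step (3) with pairwise overlaps all equal to (the integer closest to) $\frac{2n}{c_1^2(c_1+1)}$ while keeping each set of size exactly $n/c_1$; I expect to handle this by a near-resolvable-design-style construction or by an averaging argument showing such overlaps are simultaneously achievable.
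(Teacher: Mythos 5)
Your proposal matches the paper's proof essentially step for step: the upper bound direction comes from the same Bonferroni/truncated inclusion--exclusion estimate $n \geq (c_1+1)\ell - \binom{c_1+1}{2}\gamma$, rearranged to force $c_2 \geq \frac{2}{c_1^3+c_1^2}$, and the lower bound direction comes from the same symmetric size-$(c_1+1)$ design in which every element lies in at most two sets (making the Bonferroni bound tight) followed by monotonicity in $\gamma$. The paper is about as informal as you are at the one genuinely delicate spot --- it just asserts such a design exists after checking $c_1\gamma \leq \ell$, while you flag the need for a small resolvable-design-style argument; neither writeup spells that construction out, and both silently assume the overlap size $\frac{2n}{c_1^2(c_1+1)}$ is an integer, so there is no substantive difference between your route and the paper's.
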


The proof of Theorem \ref{thm:constantm} can be bound in the appendix. We instead prove an easier result here. Theorem \ref{thm:upperboundonm} upper bounds $\lim_{n\rightarrow \infty} m(n,\ell,\gamma)$ when $\ell$ is in a constant ratio to $n$ and $\gamma$ is small. Theorem \ref{thm:upperboundonm} holds because the $k$'th set $S_k$ must use $cn-(k-1)\gamma$ new elements (elements that are not in $\bigcup_{i=1}^{k-1} S_i$).
\newcommand{\theoremupperboundonm}{$\forall$ $\gamma_c$, $0<c<1$ such that $cn \in \mathbb{N}$. $m(n, cn, \gamma_c) \rightarrow \lfloor \frac{1}{c} \rfloor$ as $n \rightarrow \infty$.}
\begin{theorem} \label{thm:upperboundonm} \theoremupperboundonm \end{theorem}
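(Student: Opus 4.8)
The plan is to prove the two matching bounds $\lfloor 1/c\rfloor \le m(n,cn,\gamma_c)\le \lfloor 1/c\rfloor$ for all sufficiently large $n$ with $cn\in\mathbb{N}$; the stated limit is then immediate. Since $0<c<1$ we have $\lfloor 1/c\rfloor\ge 1$ and $\lfloor 1/c\rfloor\cdot cn\le n$, so for the lower bound I would simply partition (a subset of) $[n]$ into $\lfloor 1/c\rfloor$ pairwise disjoint blocks of size $cn$; disjoint sets share $0\le\gamma_c$ elements, so $m(n,cn,\gamma_c)\ge\lfloor 1/c\rfloor$ for every valid $n$. This direction needs no largeness assumption.

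For the upper bound I would use exactly the ``new element'' idea flagged in the statement. Let $S_1,\ldots,S_m$ be any $(n,cn,\gamma_c)$-sharing family. For each $k$ a union bound over the earlier sets gives $\big|S_k\cap\bigcup_{i<k}S_i\big|\le\sum_{i<k}|S_k\cap S_i|\le(k-1)\gamma_c$, so $S_k$ contributes at least $cn-(k-1)\gamma_c$ elements not seen before. These ``fresh'' pieces are pairwise disjoint subsets of $[n]$, so summing over $k\in[m]$ yields $\sum_{k=1}^{m}\bigl(cn-(k-1)\gamma_c\bigr)\le n$, i.e. $mcn-\gamma_c\binom{m}{2}\le n$, equivalently $n(mc-1)\le\gamma_c\binom{m}{2}$.

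Finally I would convert this into a cap on $m$. Suppose $m(n,cn,\gamma_c)\ge m_0:=\lfloor 1/c\rfloor+1$ for some $n$; passing to a sub-family of size exactly $m_0$ (still sharing), the inequality above gives $n(m_0c-1)\le\gamma_c\binom{m_0}{2}$. Because $m_0$ is an integer strictly larger than $1/c$, the quantity $\delta_c:=m_0c-1$ is a positive constant depending only on $c$, whence $n\le\gamma_c\binom{m_0}{2}/\delta_c=:N_0(c,\gamma_c)$. Thus for every $n>N_0$ with $cn\in\mathbb{N}$ we get $m(n,cn,\gamma_c)\le\lfloor 1/c\rfloor$, which together with the lower bound forces $m(n,cn,\gamma_c)=\lfloor 1/c\rfloor$ and hence $m(n,cn,\gamma_c)\to\lfloor 1/c\rfloor$.

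I expect the only genuinely delicate step to be this last one: the ``fresh element'' inequality alone does \emph{not} bound $m$ (once $(k-1)\gamma_c\ge cn$ its terms turn negative and the constraint becomes vacuous for growing $m$), so it is essential to apply it to a sub-family of the fixed size $m_0=\lfloor 1/c\rfloor+1$ and to use that the coefficient $m_0c-1$ is bounded below by a positive constant precisely because $m_0$ is an integer exceeding $1/c$ — this is where the floor comes from, and the argument is uniform in whether or not $1/c$ is itself an integer. The remaining manipulations are routine arithmetic.
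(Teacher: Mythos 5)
Your proof is correct and uses essentially the same ``fresh elements'' counting argument as the paper: sum the lower bound $|S_k\setminus\bigcup_{i<k}S_i|\ge cn-(k-1)\gamma_c$ over a sub-family of fixed size $\lfloor 1/c\rfloor+1$ and let $n\to\infty$ to get a contradiction. You are somewhat more careful than the paper in two respects — you supply the matching lower bound $m\ge\lfloor 1/c\rfloor$ via disjoint blocks, which the paper leaves implicit, and you explicitly isolate the key point that the inequality must be applied at a \emph{fixed} sub-family size before taking the limit — but the underlying idea is identical.
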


\begin{proof} Let $\ell = cn$ and let  $\tau  \in \mathbb{N}$ be an integer such that $\tau > \lfloor \frac{1}{c} \rfloor$. The first set will contain $\ell$ elements. The second set can share at most $\gamma$ of them, so the second set must contain at least $\ell - \gamma$ previously unused elements. Therefore the union of the first two sets must contain at least $2\ell - \gamma$ elements. In a similar manner, the $kth$ set must contain at least $\ell - (k-1)\gamma$ new elements, therefore, 
\begin{equation} \label{eqn:thmInequality} k\ell - \frac{(k-1)k\gamma}{2} \leq \left|\bigcup_{i=1}^k S_i\right|  \leq n \ . \end{equation}
Assume for contradiction that $\limsup_{n \rightarrow \infty} m(n, cn, \gamma_c) = \tau$.Then we have 
\begin{eqnarray*}
\lim_{n \rightarrow \infty} \left(n-\tau \ell+ \frac{(k-1)k\gamma}{2}\right) &=& \lim_{n \rightarrow \infty} \left( n-\tau cn + \frac{(k-1)k\gamma}{2} \right) \\
&=& \lim_{n \rightarrow \infty} \left(n \left(1-c\tau \right)\right) \\
&=& - \infty \ .
\end{eqnarray*}
This contradicts equation \ref{eqn:thmInequality}.
\end{proof}

We also show that the upper bound from Theorem \ref{thm:constantm} is nearly tight. In particular, when $\gamma = c_2n$ for a slightly larger constant $c_2$ then $m(n,\ell,\gamma)$ is exponentially large. Theorem \ref{thm:chernoffapplication} lower bounds the values of $c_2$ for which $m(n,\ell,\gamma)$ is exponentially large.

The full proof of Theorem \ref{thm:chernoffapplication} is found in the appendix.  We demonstrate the existence of an $(n, \ell, \gamma)$--sharing set family of exponential size by showing that the probability of obtaining such a set family through random selection is non-zero. Our proof uses the following randomized construction of an $(n, \ell, \gamma)$--sharing set family. Independently choose random integers $r_i^j$ each in the range $0\leq r_i < c_1$ for $i \in \{0,\ldots,\ell-1\}$ and $j \in [m]$. Let $S_j = \bigcup\limits_{i=0}^{\ell-1}\{ ic_1 + r_i^j\}$.  We use standard concentration bounds due to Chernoff \cite{chernoff1952measure} to show that $\left| S_j \bigcap S_j\right| \leq \gamma$ with high probability, and then we union bounds to argue that the entire set family is $(n, \ell, \gamma)$--sharing with non-zero probability.

\newcommand{\thmChernoffApplication}{ $\forall$ $c_2>0, n, c_1 \in \mathbb{N}$ such that $c_1\vert n$. $m(n, \frac{n}{c_1}, c_2n) > exp(O(n))$ if $c_2 > \frac{1}{c_1^2} + \epsilon$.}
\begin{theorem} \label{thm:chernoffapplication}
\thmChernoffApplication
\end{theorem}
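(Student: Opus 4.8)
The plan is to prove the theorem by the probabilistic method, using precisely the randomized construction sketched above. Fix $\ell = n/c_1$ and $\gamma = c_2 n$, and draw the integers $r_i^j \in \{0,1,\dots,c_1-1\}$ independently and uniformly for $i \in \{0,\dots,\ell-1\}$ and $j \in [m]$, setting $S_j = \bigcup_{i=0}^{\ell-1}\{ic_1+r_i^j\}$. The first observation is that the $\ell$ blocks $B_i := \{ic_1,\dots,ic_1+c_1-1\}$ partition $\{0,\dots,n-1\}$ and that $S_j$ picks exactly one element out of each block, so $\left|S_j\right| = \ell$ holds \emph{deterministically}; hence the only condition that can fail is the pairwise intersection bound, and it suffices to show that with positive probability $\left|S_j\cap S_{j'}\right|\le\gamma$ for every $j\ne j'$.

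Next I would analyze a single fixed pair $j \ne j'$. For each block index $i$ let $X_i$ be the indicator of the event $r_i^j = r_i^{j'}$; the two sets agree inside block $B_i$ exactly on this event, so $\left|S_j\cap S_{j'}\right| = \sum_{i=0}^{\ell-1} X_i$ is a sum of $\ell$ \emph{independent} Bernoulli$(1/c_1)$ variables with mean $\mu := \ell/c_1 = n/c_1^2$. The hypothesis $c_2 > 1/c_1^2 + \epsilon$ makes the slack linear in $n$, namely $\gamma - \mu \ge \epsilon n$, so a Chernoff-type bound \cite{chernoff1952measure} gives
\[
\Pr\!\left[\left|S_j\cap S_{j'}\right| > \gamma\right] \;\le\; e^{-\beta n},
\]
for a constant $\beta = \beta(c_1,\epsilon) > 0$ independent of $n$ and $m$ (e.g.\ $\beta = 2\epsilon^2 c_1$, via Hoeffding's inequality $\Pr[\sum X_i \ge \mu + t]\le e^{-2t^2/\ell}$ with $t = \gamma-\mu$).

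Finally I would take a union bound over the $\binom{m}{2} < m^2$ unordered pairs: the random family fails to be $(n,\ell,\gamma)$-sharing with probability at most $m^2 e^{-\beta n}$, which is strictly below $1$ as soon as $m \le e^{\beta n/3}$ (and $n$ is large). Therefore such a family exists with $m = \lfloor e^{\beta n/3}\rfloor$, giving $m(n,n/c_1,c_2 n) \ge e^{\Omega(n)}$ as claimed. (If $c_2 \ge 1/c_1$ then $\gamma \ge \ell$ and every $\ell$-uniform family trivially works, so this settles the remaining cases.)

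The main obstacle is quantitative rather than structural: one must verify that the gap $\gamma - \mu$ is genuinely $\Omega(n)$ — which is exactly where the strict inequality $c_2 > 1/c_1^2$ is used, since at $c_2 = 1/c_1^2$ the gap is $o(n)$ and the tail bound no longer overcomes the $m^2$ factor — while keeping $\beta n/3$ large enough for $m$ to remain exponential. A secondary point is to be careful that, within a fixed pair, the $X_i$ are independent (they are, inheriting independence across $i$ from the $r_i^j$), even though the intersection sizes of \emph{different} pairs are positively correlated; that correlation is harmless because the union bound needs no independence across pairs. Rounding issues (non-integrality of $e^{\beta n/3}$, of $\mu$, etc.) are immaterial and absorb into the constants.
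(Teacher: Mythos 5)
Your proposal is correct and follows essentially the same route as the paper: the identical blockwise randomized construction $S_j = \bigcup_i\{ic_1+r_i^j\}$, the same decomposition of $|S_j\cap S_{j'}|$ into independent Bernoulli indicators per block, a Chernoff/Hoeffding tail bound for a fixed pair, and a union bound over $\binom{m}{2}$ pairs to get $m = e^{\Omega(n)}$. The only difference is cosmetic: you apply Hoeffding over the $\ell$ summands directly and get the slightly sharper exponent $2\epsilon^2 c_1 n$, whereas the paper states the looser $2\epsilon^2 n$.
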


Blocki et al \cite{NaturallyRehearsingPasswords} observed that $m\paren{n,\gamma+1,\gamma} ={n \choose \gamma+1} $ whenever $n\geq \gamma+1$. We observe that in general $m\paren{n,\ell,\gamma} \geq m\paren{n,\ell+1,\gamma}$ whenever $\ell \geq \gamma+1$ \footnote{Suppose that $\ell \geq \gamma+1$ and we have an $\paren{n,\ell+1,\gamma}$-sharing set family $S_1,\ldots,S_m\subseteq[n]$ of size $m$. We can form a $\paren{n,\ell,\gamma}$-sharing set family $S_1',\ldots,S_m'\subseteq [n]$ by picking some element $s_i \in S_i$ setting $S_i'=S_i-\{s_i\}$ for each $i \in [m]$. Observe that this argument does not apply whenever $\ell = \gamma$ because then we might have $S_i' = S_j'$ for $i \neq j$. }. This implies that whenever $n/2\geq \gamma+1$ we have \[ \max_{\ell \geq \gamma} m\paren{n,\ell,\gamma} = m\paren{n,\gamma+1,\gamma} = {n \choose \gamma+1} \ , \]
and whenever $\gamma \geq n/2$ we have $\max_{\ell \geq \gamma} m\paren{n,\ell,\gamma} = m\paren{n,\gamma,\gamma} = {n \choose \gamma}$.
Clearly, the inequality $m\paren{n,\ell,\gamma} \geq m\paren{n,\ell,\gamma+1}$ also holds. Both of these inequalities also hold for weak $\paren{n,\ell,\gamma}$-sharing set families.

\section{Open Questions} \label{sec:OpenQuestions}
We conclude with some open questions. 

We have shown that the explicit construction of Blocki et al \cite{NaturallyRehearsingPasswords} can be used with the weaker requirements of Raz et al \cite{Raz:1999:ERR:301250.301292} to create weak $\paren{n,\ell,\gamma}$-sharing set families of arbitrarily large size. Our analysis uses a number of potentially loose bounds, however, so it is possible that a better analysis of the Blocki et al construction for weak set families could improve our requirements on the parameters. Also of interest is whether there is another explicit construction that would perform better than the Blocki et al construction.

We have shown that the value $m\paren{n,n/c_1,nc_2}$ is constant whenever $c_2 \leq \frac{2}{c_1^3 + c_1^2}$. Furthermore, we showed that whenever $c_2 > \frac{1}{c_1^2}$, $m(n,n/c_1,nc_2)$ grows exponentially. How does $m(n,n/c_1,nc_2)$ grow whenever $c_2 \in \left[\frac{2}{c_1^3 + c_1^2} ,\frac{1}{c_1^2}\right]$? 

We have shown that $\pi\pi(n)$ never exceeds $\pi(n) - \pi(\frac{n}{2}) + \pi(\sqrt{n})$. We hypothesize that $\pi\pi(n) = \pi(n) - \pi(\frac{n}{2}) + \pi(\sqrt{n})$ for all $n \geq 55$. A simple method to select a maximally-sized set of relatively prime integers is to take the square of each prime between $\sqrt{\frac{n}{2}}$ and $\sqrt{n}$, and the product of the $j$\rq{}th prime less than $\sqrt{\frac{n}{2}}$ and the $k$\rq{}th prime greater than $\sqrt{n}$, for $j$ from $1$ to $\pi(\sqrt{n})$ and $k=j$ unless this would make the product less than $\frac{n}{2}$ in which case k is chosen to be the minimum value greater than the previous k so that the product is great than $\frac{n}{2}$. With the aid of a computer we have shown this equation true for all $n$ from 1 to 100,000, except for 51, 52, 53, and 54.

\jb{Better explicit construction of weak $\paren{n,\ell,\gamma}$-sharing set families? Tighter analysis the construction of Blocki et al \cite{NaturallyRehearsingPasswords} in terms of the weaker requirements of Raz et al \cite{Raz:1999:ERR:301250.301292}.}

\bibliographystyle{alpha}
\bibliography{setsharing}

\section{Missing Proofs}
\cut{
\begin{remindertheorem}{\ref{thm:pipibound}} \thmpipibound
\end{remindertheorem}

\begin{proofof}{theorem \ref{thm:pipibound}} 
Let $S \subseteq [n]\backslash[\lceil \frac{n}{2}\rceil-1]$ be a set of coprime numbers of maximum size. Observe that each prime number $p \in [n]$ is a factor of at most one number in $S$. Without loss of generality we can assume that each of the primes between $n$ and $\frac{n}{2}$  are contained in $S$ (if we have $t=pq\in S$ where $p \in \left[\frac{n}{2},n\right]$ is prime then we could simply replace $t$ with $p$). The number of primes between $n$ and $\frac{n}{2}$ is $\pi(n) - \pi(\frac{n}{2})$, and all of these integers are relatively prime to each other and to every other number in the range $[n]$. All other numbers in $S$ must have at least two prime factors, and at least one of them must be less than or equal to $\sqrt{n}$. Since each prime factor less than or equal to $\sqrt{n}$ can be used at most once, for the members of $S$ to remain pairwise relatively prime, at most $\pi (\sqrt{n})$ non-primes can be included in the set, each containing a single prime factor less that $\sqrt{n}$.
\end{proofof}
}
\begin{remindertheorem}{\ref{thm:constantm}} \theoremconstantm \end{remindertheorem}

\begin{proofof}{theorem \ref{thm:constantm}} Suppose that for some valid $n, c_1, c_2$ there is an $(n, \ell, \gamma)$--sharing set family of size $c_1 + 1$. By equation \ref{eqn:thmInequality}, the number of elements used by such a set family must be at least: 
\begin{equation} \label{eqn:smallestgrowingm} (c_1 + 1)\ell - \frac{c_1 (c_1 + 1)\gamma}{2} \leq n \end{equation}
Taking advantage of the fact that $\ell=\frac{n}{c_1}$ and $\gamma = c_2n$, the inequality can be simplified:
\begin{eqnarray*}
n+\ell-\frac{c_1(c_1+1)\gamma}{2} &\leq& n \\
\ell &\leq& \frac{c_1(c_1+1)\gamma}{2}\\
\frac{n}{c_1}&\leq& \frac{c_1(c_1+1)c_2n}{2} \\
2n &\leq& (c_1^3 + c_1^2)c_2n \\
\frac{2}{c_1^3+c_1^2} &\leq& c_2 \ .
\end{eqnarray*}
Thus, all set families of size $c_1+1$ or greater must have $c_2 \geq \frac{2}{c_1^3+c_1^2}$, and  $c_2 < \frac{2}{c_1^3+c_1^2}$ guarantees the set family will have a size of at most $c_1$.

Since $c_1\ell = n$, it is possible to make a family of size $c_1$ for any value of $c_2$ by simply choosing sets that share no elements. Therefore, the size of the largest possible set family for any $n, \ell, \gamma$ meeting the specified conditions is $c_1$ if  $c_2 < \frac{2}{c_1^3+c_1^2}$.

If $c_2 \geq \frac{2}{c_1^3+c_1^2}$, there will always exist a set family of size $\geq c_1 +1$. To create such a family, choose $c_1 + 1$ sets such that each of them shares $\gamma$ elements with each of the others. This will be possible as long as: 
\begin{eqnarray*} 
c_1\gamma&\leq& \ell \\ 
nc_1c_2&\leq& \frac{n}{c_1} \\
c_1^2c_2 &\leq& 1 \\
\frac{2c_1^2}{c_1^3+c_1^2} &\leq& 1 \ .
\end{eqnarray*}
Since this final inequality is true for all possible values of $c_1$, it will such a set family can always be created, and its size will be, as shown earlier, $n$ when $c_2 = \frac{2}{c_1^3+c_1^2}$. Since increasing $c_2$ will not eliminate any possible set families, no $n, \ell, \gamma$ satisfying the conditions with $c_2 \geq \frac{2}{c_1^3+c_1^2}$ will have a maximum family size $<c_1 + 1$. Therefore, the size of the largest possible set family for a valid $n, \ell, \gamma$ will be $c_1$ iff $c_2 < \frac{2}{c_1^3+c_1^2}$. \end{proofof}

\cut{
\begin{theorem} \label{thm:UpperBound} \cite{NaturallyRehearsingPasswords}
Suppose that $\mathcal{S} = \left\{S_1,...,S_m\right\}$ is a $\left(n,\ell,\gamma\right)$-sharing set family of size $m$ then $m \leq {{n \choose {\gamma+1} }} \Big/ {{\ell \choose {\gamma+1} } }$.
\end{theorem}
}

The proof of theorem \ref{thm:chernoffapplication} is based on standard concentration bounds due to Chernoff. We use the specific form from Theorem \ref{thm:chernoffbound}. We demonstrate the existence of an $(n, \ell, \gamma)$--sharing set family of exponential size by showing that the probability of obtaining such a set family through random selection is non-zero.

\begin{theorem} \label{thm:chernoffbound} \cite{chernoff1952measure} Let $X_1, \ldots , X_n \in [0,1]$ be a sequence of independent random variables. Let $S = \sum_{i=1}^n x_i$, and let $\mu = \mathbf{E}[S]$. Then for all $\delta \geq 0$  \[ \Pr [S \geq  \mu + \delta n] \leq e^{-2n\delta^2 }\ .\]
\end{theorem}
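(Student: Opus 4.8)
The plan is to establish this by the exponential moment method. Fix any $t > 0$. Since $e^{tS}$ is a nonnegative random variable, Markov's inequality gives
\[ \Pr[S \geq \mu + \delta n] = \Pr\paren{e^{tS} \geq e^{t(\mu + \delta n)}} \leq e^{-t(\mu + \delta n)}\, \mathbf{E}\!\left[e^{tS}\right], \]
and by independence of the $X_i$ we have $\mathbf{E}[e^{tS}] = \prod_{i=1}^n \mathbf{E}[e^{tX_i}]$, so it suffices to bound each factor $\mathbf{E}[e^{tX_i}]$ individually.

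The key step is the bound $\mathbf{E}[e^{tX_i}] \leq \exp\paren{t\mu_i + t^2/8}$, where $\mu_i = \mathbf{E}[X_i]$ (Hoeffding's lemma). I would prove it by centering: set $Y = X_i - \mu_i$, which lies in the interval $[-\mu_i,\, 1-\mu_i]$ of length $1$. By convexity of $y \mapsto e^{ty}$, for $y$ in this interval $e^{ty} \leq (1-\mu_i-y)\,e^{-t\mu_i} + (y+\mu_i)\,e^{t(1-\mu_i)}$; taking expectations and using $\mathbf{E}[Y]=0$ kills the linear term, leaving $\mathbf{E}[e^{tY}] \leq (1-\mu_i)e^{-t\mu_i} + \mu_i e^{t(1-\mu_i)} = e^{\phi(t)}$, where $\phi(t) = -t\mu_i + \ln\paren{1-\mu_i+\mu_i e^{t}}$. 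A direct computation shows $\phi(0) = 0$, $\phi'(0) = 0$, and $\phi''(t) = q(1-q) \leq \tfrac14$ with $q = \mu_i e^t/(1-\mu_i+\mu_i e^t) \in (0,1)$; Taylor's theorem with the Lagrange remainder then yields $\phi(t) \leq t^2/8$, whence $\mathbf{E}[e^{tX_i}] = e^{t\mu_i}\mathbf{E}[e^{tY}] \leq e^{t\mu_i + t^2/8}$.

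Multiplying over $i$ gives $\mathbf{E}[e^{tS}] \leq \exp\paren{t\mu + nt^2/8}$, so
\[ \Pr[S \geq \mu + \delta n] \leq \exp\paren{-t\delta n + \tfrac{nt^2}{8}}. \]
It remains to optimize the free parameter $t > 0$: the exponent $-t\delta n + nt^2/8$ is minimized at $t = 4\delta$, giving exponent $-4\delta^2 n + 2\delta^2 n = -2\delta^2 n$, i.e. the claimed bound $e^{-2n\delta^2}$ (the case $\delta = 0$ being trivial since the right-hand side is then $1$).

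The main obstacle is Hoeffding's lemma, and within it the estimate $\phi'' \leq 1/4$, which is precisely where the constant $2$ in the final exponent originates; the Markov step, the product over independent factors, and the optimization over $t$ are all routine. An alternative, if one prefers not to reproduce this argument, is to cite the one-sided form of Hoeffding's inequality for $[0,1]$-valued variables directly, since Theorem~\ref{thm:chernoffbound} is exactly that statement.
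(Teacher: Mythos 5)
Your proof is correct. The paper does not prove Theorem~\ref{thm:chernoffbound} at all --- it is imported by citation as an external concentration bound --- so there is no internal argument to compare against. What you have written is the standard derivation of the one-sided Hoeffding inequality for $[0,1]$-valued variables: Markov's inequality applied to $e^{tS}$, factorization of the moment generating function by independence, Hoeffding's lemma via the convexity/Taylor argument with $\phi'' = q(1-q) \le \tfrac14$, and optimization at $t = 4\delta$ yielding the exponent $-2n\delta^2$. All steps check out, including the constant. Your closing remark is also apt: for the purposes of this paper a direct citation suffices, which is exactly what the authors do.
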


\begin{remindertheorem}{\ref{thm:chernoffapplication}}
\thmChernoffApplication
\end{remindertheorem}

\begin{proofof}{Theorem \ref{thm:chernoffapplication}} We create an $(n, \ell, \gamma)$--sharing set family by creating sets in the following manner: Independently choose random integers $r_i^j$ each in the range $0\leq r_i < c_1$ for $j \in [m]$ and $i \in \{0,\ldots,\ell-1\}$. Let $S_j = \bigcup\limits_{i=0}^{\ell-1} \left\{ ic_1 + r_i^j\right\}$. Given two such sets, $S_j, S_k$ let 
\[x_i = \left\{
     \begin{array}{lr}
       1 & : r_i^j = r_i^k\\
       0 & :  r_i^j \neq r_i^k
     \end{array}
   \right.\] Then the number of elements shared by $S_j$ and $S_k$ is \[S_j \cap S_k = \sum\limits_{i=0}^{\ell-1} x_i \ .\] Let $\mu = \mathbf{E}\left[S_j\cap S_k\right]=\frac{n}{c_1^2}$ denote the expected number of shared elements. The probability that two such sets share more than $\gamma$ elements, given $c_2 = \frac{1}{c_1^2} + \epsilon$ is 
\begin{eqnarray*}
Pr[\left|S_j \cap S_k\right| > \gamma] &=& Pr[\sum\limits_{i=0}^{\ell-1} x_i > c_2n] \\
&=& Pr[\sum x_i > \frac{n}{c_1^2} + n\epsilon] \\
&\leq& Pr[\sum x_i \geq \mu + \epsilon n] \\
 &\leq& e^{-2n\epsilon^2}
\end{eqnarray*}
with the last step by Theorem \ref{thm:chernoffbound}. Thus the probability that two randomly selected sets share more than $\gamma$ elements is at most $e^{-2n\epsilon^2}$. 

An  $(n, \ell, \gamma)$--sharing set family of size $m$ will contain $m \choose 2$ pairs of sets. The probability that the family is valid, with none of the sets sharing more than $\gamma$ elements is 
\begin{eqnarray*}Pr[\exists j \neq k : \left|S_j \cap S_k\right| > \gamma] &\leq& {m \choose 2} Pr[\left|S_j \cap S_k\right| > \gamma] \\
&\leq& {m \choose 2} e^{-2n\epsilon^2} \\
&\leq& m^2 e^{-2n\epsilon^2}
\end{eqnarray*} 
by the union bound. For $m<e^{n\epsilon^2}$, this probability will be less than 1, meaning there is a non-zero chance of forming a valid set family of size $m$ by random selection and therefore such a family must exist.
\end{proofof}

\end{document}